\documentclass{article}

\usepackage{arxiv}

\usepackage{amsmath, amssymb, amsthm, graphicx, booktabs, makecell}
\usepackage{tcolorbox}
\usepackage{subcaption}
\usepackage{multirow}
\usepackage{tikz}
\usetikzlibrary{positioning}
\usepackage{tabularx}

\usepackage{xcolor}
\usepackage[breaklinks]{hyperref}
\hypersetup{
  colorlinks = {true},
  citecolor = {blue},
  menubordercolor = {red},
  citebordercolor = {white},
  linkbordercolor = {red},
  urlcolor = {black},
  pdftitle = {The Innate Economic Preferences of Language Models},
  pdfauthor = {Joy Buchanan and Joshua Foster},
  pdfsubject = {Economic preferences of language models},
  pdfkeywords = {Revealed Preference, Structural Discrete Choice, AI Alignment}
}

\usepackage{fnpct}

\usepackage{natbib}

\renewcommand{\headeright}{Preprint}
\renewcommand{\undertitle}{Preprint}
\renewcommand{\shorttitle}{The Innate Economic Preferences of Language Models}

\theoremstyle{plain}
\newtheorem{proposition}{Proposition}[section]

\theoremstyle{definition}

\newtheorem{definition}{Definition}[section]
\newtheorem{measure}{Measure}[section]
\theoremstyle{remark}

\title{The Innate Economic Preferences\\ of Language Models}

\author{
  Joy Buchanan\thanks{Email: \href{mailto:jbuchan1@samford.edu}{jbuchan1@samford.edu}} \\
  Brock School of Business \\
  Samford University
  \And
  Joshua Foster\thanks{Email: \href{mailto:jfoster@ivey.ca}{jfoster@ivey.ca}; supported in part by funding from the Social Sciences and Humanities Research Council of Canada.} \\
  Ivey Business School \\
  Western University
}

\date{July 28, 2026}

\begin{document}

\maketitle

\begin{abstract}
  Language models increasingly settle real resource tradeoffs on behalf of principals yet their economic preferences remain unobserved. We demonstrate their generation rule is isomorphic to the random utility model of discrete choice. This allows internal logit scores to structurally identify preferences. Estimating risk attitudes across twelve models in a portfolio task reveals universal but heterogeneous risk aversion. Although models reject strictly dominated options, their elicited preferences fail invariance tests and violate the independence of irrelevant alternatives across varying experimental prompts. Finally, fine tuning establishes that a principal can explicitly engineer a target risk attitude.
\end{abstract}

\keywords{Revealed Preference \and Structural Discrete Choice \and AI Alignment}
\noindent \textbf{JEL Classification:} C25, C45, D81, G11.

\newpage

\section{Introduction}\label{sec:introduction}

Language models (i.e. LLMs) are increasingly delegated authority over decisions that allocate real resources. They are used to negotiate on the behalf of a principal \citep{araujo2026how}, to synthesize the information that determines the outcomes of lending \citep{eisfeldt2024ai}, and to advise households on saving and investing during the life cycle \citep{desilva2026ai}. Initially, language models gained attention for drafting text and meeting technical benchmarks. Now, especially with the introduction of AI agents, language models are resolving economic tradeoffs without direct input from a human principal. The economic question is no longer whether they are intelligent but whose preferences they impose. Importantly, a principal delegating an allocation problem to a language model cannot \emph{ex ante} specify her economic preferences for every unforeseen decision it will encounter. \cite{imas2025agentic} show experimentally that this gap is consequential. When many such delegated choices clear the same market, the risk attitude a model brings to an underspecified instruction stops being a private feature and shapes how resources are allocated, a transition \cite{shahidi2025coasean} argues is imminent as agents transact on behalf of consumers. That attitude is the object we measure. We study not a model reasoning toward a specified mandate but the default preference it reveals when the instruction leaves the choice over economic tradeoffs open, and we ask whether that default is stable enough to be treated as a preference at all.

We argue that a language model capable of reasoning need not be a coherent chooser. Typically, a model is judged as capable if it produces the known correct answer to a factual question. Choosing among competing economic ends is different. When those ends conflict, no choice is correct in the abstract because the tradeoff among them is settled by preference rather than by analysis, and what a principal delegates is precisely the authority to choose. A model can therefore reason well and still choose in ways that reveal no stable preference at all. This lack of stability becomes an explicit economic liability when choices violate, for example, the axiom of transitivity. If a language model's choices cycle according to a standard money pump, a counterparty can use it to systematically extract wealth from the principal. This is the distinction that matters for delegation. Therefore, our central question follows. What preferences govern a language model's choices over economic tradeoffs, and do those choices satisfy the revealed-preference restrictions required for a stable utility interpretation? Our answer, for the risk tradeoffs we study, is that they do, within limits we make precise. When the instruction leaves the tradeoff open, every model brings a systematic and measurable default risk attitude to the choice. However, we find the models' preferences are not invariant to how the options are presented to them.

We reach this answer with a single measurement program built on restricting how a language model chooses. Specifically, we require the model to express its choice by emitting a single unit of output, called a \emph{token}, as its revealed preference. The model selects that token through a fixed decoding rule that maps a vector of internal scores, its logits, into a probability over the available tokens. Our first contribution is to establish that this decoding rule admits an exact random utility representation, one in which the logits are the systematic utility index and an internal parameter called \emph{temperature} is the scale of an additive extreme-value shock. The representation is precisely the conditional logit of \citet{mcfadden1972conditional}, but its standing here is different. In the classical model, the systematic index is latent and must be inferred from choices under an assumed distribution for the unobserved shock. In a language model, the index is the logit vector itself, recorded directly from a single forward pass of the model, so the same quantity the discrete-choice econometrician estimates is here observed. The stochastic part of the representation imposed by temperature is only the decoding noise. Because the systematic index is observed, the identification of preferences is not based on the shock distribution that the classical estimator leans on.

Our experimental program that generated these insights has two steps. We first test whether a stable preference exists or whether the model's choices obey the revealed-preference restrictions required for a utility interpretation. Six diagnostics operationalize a set of standard revealed-preference axioms: completeness, reflexivity, monotonicity, transitivity, continuity, and independence of irrelevant alternatives. In the second step, we estimate preferences structurally in a controlled portfolio choice environment. In both steps, the model faces menus of assets, each described by an expected return and standard deviation, and selects one in a single forced choice, with the attributes varied exogenously across menus so that the return-risk tradeoff is entirely under the researcher's control. A maintained quadratic utility over the two attributes reduces the model's preference to a single risk-aversion parameter, which the estimator recovers. For \emph{open-weight} models, for which internal computation can be readily inspected, we read the utility index directly, so we observe the utility-relevant signal itself rather than reconstructing it from choices as in a human experiment \citep{ludwig2025large}. For proprietary \emph{frontier} models such as those from OpenAI, Anthropic and Google, whose internal computation is hidden, we recover the same quantities from repeated sampled choices using standard maximum likelihood, exactly as one would with human subjects. The replication code we provide is a protocol we recommend for testing agents along these dimensions of rationality.

Our findings across models demonstrate their choices respect the economic content of a menu but not its irrelevant features. Monotonicity, continuity, and transitivity hold at or near perfect rationality, so the models honor mean-risk dominance, respond smoothly to gradual changes in return and risk, and rank assets consistently. However, they fail to respond in a perfectly rational manner to the presentation of irrelevant alternatives, which is a form of changing the context of the decision. Moving an option's position in a menu can change the value a model attaches to it, and adding a strictly dominated third asset can shift the strength of preference between two unchanged options, thus reflexivity and independence of irrelevant alternatives are the weakest diagnostics throughout. The economic consequence of this error is that a model's preference ordering remains relatively stable while its intensity of preference does not, and only near indifference does this instability become observable. Furthermore, our structural estimations reveal that every model we evaluate is risk averse, and risk aversion varies across the major AI labs (OpenAI, Anthropic, Alibaba, etc.). Thus, the same menu implies materially different chosen portfolios depending on which lab's model is asked.

Although these findings broadly support treating language models as agents with stable and measurable preferences, they may not be optimal for a principal's delegated task. This raises the question of whether a desired economic preference can be installed in a model rather than merely measured in it. We show that it can. By fine-tuning, the standard procedure for adjusting a model's weights toward a specified objective,\footnote{This fine-tuning exercise differs from preference-based alignment pipelines that learn from human comparisons or optimize policies against learned preference models \citep{christiano2017deep,ouyang2022training,rafailov2023direct}. Related economic work fine-tunes language model agents toward explicit rational and moral preference structures in economic games and moral dilemmas \citep{lu2025aligning}.} we write the maintained utility into a specific open-sourced language model at preset risk-preference targets, and the model's induced choices match risk tolerance targets with high precision. Fine-tuning is therefore a direct lever for inducing a specified economic preference, expressed through the utility function, into the agent that will act on it.

Our contribution sits among several literatures. The first studies language models as economic agents and behavioral subjects. A growing body treats models as simulated decision-makers that asks how closely their behavior follows that of people \citep{horton2023large,mei2024turing,park2023generative, akata2025playing}, deploys them as synthetic respondents in applied research \citep{brand2023using}, and measures their psychological traits directly \citep{serapio2025psychometric}. Part of this work documents that model choices respond to surface features such as the order in which options are listed \citep{pezeshkpour2024large}, a sensitivity that our reflexivity and invariance diagnostics formalize and quantify. \cite{bini2026behavioral} study language model responses in preference-based tasks. Most of this literature describes behavior, whereas we estimate and test a preference. A second strand is the structural discrete choice and revealed-preference tradition on which our estimator rests \citep{afriat1967construction,mcfadden1972conditional,manski1977structure,varian1992microeconomic,berry1994estimating,train2009discrete,charness2013experimental,matvejka2015rational,demuynck2023computing}, which we apply to the directly observed internals of a generative model rather than to choices alone. A third is the experimental estimation of risk preferences from choices over alternatives with known payoff distributions \citep{holt2002risk,choi2007consistency,cohen2007estimating,harrison2008risk,andersen2008eliciting,bruhin2010risk,apesteguia2018monotone}, of which \citet{kim2024learning}, \citet{liu2025evaluating} and \citet{ouyang2024ethical} are the nearest applications to language models. \cite{ellis2026state} approach the same alignment problem from the principal's side. Our framing also connects to work on the delegation of decisions to algorithms and on when people accept or resist machine judgment \citep{dietvorst2015algorithm,logg2019algorithm,chugunova2022we}, to which we add a structural account of the preferences an algorithm brings once a decision is delegated to it.

The remainder of the paper proceeds as follows. Section~\ref{sec:structural-model} develops the random utility representation and the revealed-preference diagnostics. Section~\ref{sec:portfolio-choice} specializes the framework to the portfolio choice environment and derives the two estimators. Section~\ref{sec:experimental-design} describes the experimental design and implementation. Section~\ref{sec:results} reports the diagnostics, the structural estimates, and the fine-tuning validation. Section~\ref{sec:conclusion} concludes.

\section{Structural Model of Preference Formation}\label{sec:structural-model}

Suppose a researcher gave the following decision problem to a human subject.

\begin{tcolorbox}[
    colback=white,
    colframe=white,
    boxrule=0.6pt,
    sharp corners, 
    left=10pt,
    right=10pt,
    top=8pt,
    bottom=8pt,
    fontupper=\small\ttfamily 
  ]
  You have two options for your annual financial portfolio.
  Choose the portfolio you prefer and reply with A or B only.

  \vspace{0.5em}
  Option A offers an expected return of 5\% and a standard deviation\\
  of 10\%.

  \vspace{0.5em}
  Option B offers an expected return of 12\% and a standard deviation\\
  of 20\%.

  \vspace{1.2em}
  \hrule
  \vspace{1.2em}

  I choose Option \_\_\_\_\_
\end{tcolorbox}

From the perspective of classical choice theory, the exact mechanism driving the choice made by the subject is an unobserved latent process. For instance, it can be difficult to measure the strength of preference one has over the available options. Consequently, whatever the subject's underlying utility function is (if any), all aspects of preference must be identified through the observed decisions from the experiment. In contrast, because language models possess an explicit computational architecture that maps an option menu input to a choice, their decision-making apparatus is directly observable. In this sense, the preferences of a language model are easier to inspect than those of a human. This paper leverages this architectural transparency to identify not only the economic preferences revealed by a language model, but also specific components of the structural mechanism mapping information to choice. We begin by describing the relevant components of this architecture and how they relate to economic models of utility.

Modern transformer-based language models generate text autoregressively as a sequence of discrete tokens drawn from a finite vocabulary set $\mathcal{V}$ \citep{vaswani2017attention,brown2020language}.\footnote{Tokens are the fundamental units of text parsed by the model, encompassing words, subwords, and punctuation marks. The vocabulary set $\mathcal{V}$ is fixed by the model's architecture.} At each step in the sequence, the model evaluates the accumulated context of preceding tokens through a highly parameterized non-linear function comprising attention mechanisms and feed-forward networks. This evaluation yields a vector of latent unnormalized scores over the entire vocabulary. In the machine learning literature, these raw output values are referred to as ``logits.'' The realization of the subsequent token is then drawn stochastically from a categorical probability distribution over $\mathcal{V}$. The model derives this distribution by applying a softmax transformation to the vector of logits, mapping the unbounded latent scores into proper probabilities that sum to one.

In the discrete choice experiments that follow, we impose a one-step forced-choice problem in which a single output token from the model constitutes a revealed preference. This constrained decision problem invites comparison with the System 1 mode of cognition described by \citet{stanovich2000individual} and \citet{kahneman2003maps}. The resulting observed choices represent fast and associative heuristics rather than deliberative and sequential reasoning. Under this restricted choice architecture, we demonstrate that the model's preference rankings are completely determined by the logit differentials over the permissible action space.

For single-token revealed-preference problems, logit gaps provide the primitive object for identification. These gaps determine the relative odds of the menu labels under the decoding rule introduced below. With an affine utility specification, the same gaps become scaled utility differences. This maps the machine-learning object of a logit gap into the econometric object of a utility gap, placing our identification strategy in the structural choice tradition \citep{thurstone1927law,marschak1960binary,manski1975maximum}. We formalize the mapping by showing that the induced choice rule admits an exact random utility representation nested within the multinomial logit framework of \citet{mcfadden1972conditional}. We then define six revealed-preference indices that measure conformance to utility axioms and odds-invariance restrictions on the model's forced-choice outputs. The portfolio-choice problem in Section \ref{sec:portfolio-choice} derives the two estimators used for our structural recovery exercise.

\subsection{A Random Utility Representation}

For a given discrete choice problem, let $x$ denote the prompt context and let $C(x) \subseteq \mathcal{V}$ denote the set of label tokens (e.g. Option ``A'' vs ``B'')
comprising the feasible options
in the choice menu. Let $u_i(x)$ denote the raw logit on token $i \in \mathcal{V}$. Under standard softmax decoding \citep{bridle1989training}, the language model
induces a next-token probability distribution over the full vocabulary,
\begin{equation}\label{eq:softmax}
  P(i \mid x, \tau)
  = \frac{\exp(u_i(x) / \tau)}{\sum_{j \in \mathcal{V}} \exp(u_j(x) / \tau)}
  \qquad i \in \mathcal{V},
\end{equation}
where the parameter $\tau > 0$ is \emph{temperature}, which scales the dispersion of the induced choice distribution. As $\tau \downarrow 0$ the distribution concentrates all mass on the single token with the highest logit across $\mathcal{V}$. As $\tau \rightarrow \infty$ the distribution approaches the uniform over $\mathcal{V}$, spreading mass evenly across all vocabulary tokens and effectively destroying the model's inherent coherence.

\begin{proposition}\label{prop:rum}
  The softmax rule in Equation~\eqref{eq:softmax} admits an exact random utility representation,
  \[
    U_i(x,\tau) = u_i(x) + \tau \epsilon_i, \qquad i \in \mathcal{V},
  \]
  where the error terms $\epsilon_i$ are independent and identically distributed according to a standard Gumbel (Type I Extreme Value) distribution.
\end{proposition}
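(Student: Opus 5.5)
The plan is to verify directly that the argmax rule applied to the perturbed utilities $U_i(x,\tau)=u_i(x)+\tau\epsilon_i$ reproduces Equation~\eqref{eq:softmax}. Concretely, I will show that
\[
\Pr\bigl(U_i \ge U_j \ \text{for all } j\in\mathcal{V}\bigr) = P(i\mid x,\tau),
\]
where the $\epsilon_j$ are i.i.d.\ standard Gumbel. This is the classical McFadden computation. The only new feature is the scale $\tau$, and I handle it first by a change of units. Since $\tau>0$, the event $U_i\ge U_j$ is equivalent to $u_i/\tau+\epsilon_i \ge u_j/\tau+\epsilon_j$. So it suffices to prove the $\tau=1$ statement with systematic utilities $v_j:=u_j(x)/\tau$. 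Ties have probability zero because the Gumbel law is continuous, so the argmax is almost surely unique and the choice rule is well defined.

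For the $\tau=1$ computation, recall that the standard Gumbel CDF is $F(e)=\exp(-e^{-e})$, with density $f(e)=e^{-e}\exp(-e^{-e})$. I condition on $\epsilon_i=e$ and use independence. The conditional probability that $i$ wins is $\prod_{j\ne i}F(e+v_i-v_j)$. Integrating against $f(e)$ gives
\[
\int_{-\infty}^{\infty} e^{-e}\exp\Bigl(-e^{-e}\sum_{j\in\mathcal{V}} e^{v_j-v_i}\Bigr)\,de .
\]
Here the $j=i$ factor from the density has been absorbed into the sum. Setting $S=\sum_j e^{v_j-v_i}$ and substituting $t=e^{-e}$ turns this into $\int_0^\infty \exp(-tS)\,dt = 1/S$. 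That equals $e^{v_i}/\sum_j e^{v_j}$, which is exactly Equation~\eqref{eq:softmax} once $v_j=u_j(x)/\tau$ is substituted back. Finiteness of $\mathcal{V}$ guarantees that $S$ is a finite positive number, so every step is justified. Fubini, or simply conditioning, applies because the integrand is nonnegative.

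No step is a real obstacle. The part requiring the most care is the bookkeeping in the integral: the $j=i$ term must be correctly merged into the exponent, and the limits must reverse properly under $t=e^{-e}$. I would also remark that the representation is exact but not unique. Adding a common constant to all $u_i$, or replacing the Gumbel location, leaves the choice probabilities unchanged. This is consistent with the later emphasis on logit gaps as the identified object.

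If the proposition is read as also covering choice restricted to the label set $C(x)$, I would add a short corollary. Conditioning the random utility model on the event that the argmax lies in $C(x)$ yields the renormalized softmax over $C(x)$. The same computation also shows directly that the maximum of Gumbel variables over any subset is again Gumbel with log-sum-exp location.
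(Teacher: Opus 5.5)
Your proof is correct and is essentially the paper's argument: condition on $\epsilon_i$, multiply the Gumbel CDFs using independence, absorb the density's $\exp(-e^{-\epsilon_i})$ factor as the $j=i$ term of the sum, and substitute $t=e^{-\epsilon_i}$ to obtain $1/K$. Rescaling by $\tau$ up front, instead of carrying $(u_i(x)-u_j(x))/\tau$ through the integral as the paper does, is purely cosmetic, and your remarks on ties having probability zero and on the finiteness of $\mathcal{V}$ add rigor the paper leaves implicit.
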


\begin{proof}
  See Appendix~\ref{app:proofs}.
\end{proof}

Since the shocks $\epsilon_i$ are i.i.d.\ Type I extreme value across all $i \in \mathcal{V}$, the softmax rule in Equation~\eqref{eq:softmax} is the full-vocabulary multinomial logit representation of the random utility model from \citep{mcfadden1972conditional}. Under this interpretation, a model's raw logit values serve as a systematic utility index, and the temperature parameter $\tau$ is the scale of the stochastic component. Consequently, as $\tau \downarrow 0$, choice converges to the label with the highest logit, and the distribution recovers only the ordinal ranking of the systematic utilities (i.e. the noise vanishes and the cardinal magnitudes of utility differences become irrelevant).

To connect logits to structural preferences, we specify the systematic utility index as a positive affine transformation of a cardinal utility index $V_i$ over the full vocabulary,\footnote{Under expected utility, any two cardinal representations of the same preferences differ by a positive affine transformation \citep{von1947theory}. Equation~\eqref{eq:logit-spec} therefore preserves the underlying vNM preference ordering and the associated risk attitudes.}
\begin{equation}\label{eq:logit-spec}
  u_i(x) = \kappa V_i(x) + \zeta(x), \qquad i \in \mathcal{V}.
\end{equation}
The scale parameter $\kappa > 0$ maps utility differences into logit units and is assumed constant across prompt contexts, so that scaled utility gaps remain comparable across the menus used in our experiments. The term $\zeta(x)$ is a prompt-specific common shifter that is the same for all tokens within a context and therefore cancels from every within-menu comparison. We leave the functional form of $V_i$ is unrestricted here and explore a quadratic form in Section~\ref{sec:portfolio-choice}. Nonetheless, its dependence on preference parameters already bears on identification, since \citet{apesteguia2018monotone} show that an additive random utility model of this form can fail to identify risk preferences when $V_i$ is a constant-risk-aversion expected utility whose between-prospect difference vanishes as risk aversion grows. We therefore require $V_i$ to be affine in the risk-preference parameter, which keeps that difference monotone in it.

\subsection{Random Utility Axioms for Language Models}

Having embedded the language model's token probabilities in a random utility framework, we now formulate the restrictions under which the token-level utility index can be interpreted as stable revealed preferences, and propose concomitant measures of restriction satisfaction. The transformer architecture supplies the logits over the vocabulary, and the random utility representation maps those logits into systematic utilities and induced choice probabilities. The axioms below are therefore not mechanical consequences of the architecture, but restrictions we impose and evaluate to determine whether the induced choices admit a stable revealed-preference interpretation. Throughout this subsection, for each token $k \in \mathcal{V}$, let
\[
  U_k(x,\tau) = \kappa V_k(x) + \zeta(x) + \tau \epsilon_k
\]
denote its realized random utility.

Following this utility specification, we define rational behavior according to the classic axioms of completeness, reflexivity, monotonicity, transitivty, and continuity presented in \cite{varian1992microeconomic}. Additionally, we define an axiom for the independence of irrelevant alternatives that is implied by the random utility model \citep{hausman1984specification}. Each definition is subsequently paired with a measure of adherence to the axiom that we employ in the experimental protocol that follows.

\begin{definition}[Completeness axiom]\label{def:completeness}
  Fix a prompt context $x$ and a choice menu $C(x) \subseteq \mathcal{V}$. The model satisfies \emph{menu completeness} on $C(x)$ at temperature $\tau>0$ if its realized maximizer lies in the intended menu with probability one. Equivalently,
  \[
    \Pr\!\left(
      \arg\max_{k \in \mathcal{V}} U_k(x,\tau) \in C(x)
    \right)=1
    .
  \]
\end{definition}

This means that any token chosen with positive probability is in the choice menu. For example, if the menu choices are A and B ($C(x)=\{A,B\}$) then perfect completeness would imply that the model puts no weight on any other tokens except for those representing ``A" and ``B". Traditionally, the completeness axiom dictates that the agent can provide a ranking of feasible options, allowing for indifference between bundles but not for answers that would fail to provide a preference ranking. Our condition operationalizes this requirement as menu adherence, in that the model must confine its response to the feasible options rather than produce a token that conveys no ranking of them.

\begin{measure}[Completeness index]\label{meas:completeness-index}
  The \emph{completeness index} measures the probability that the model's realized choice lies in the intended menu,
  \[
    \begin{aligned}
      \mathcal{C}(x,\tau)
      &\equiv
      \Pr\!\left(
        \arg\max_{k \in \mathcal{V}} U_k(x,\tau) \in C(x)
      \right) \\
      &=
      \sum_{i \in C(x)} P(i \mid x,\tau)
      =
      \frac{\sum_{i \in C(x)} \exp\!\left(\frac{\kappa}{\tau} V_i(x)\right)}{\sum_{k \in \mathcal{V}} \exp\!\left(\frac{\kappa}{\tau} V_k(x)\right)}.
    \end{aligned}
  \]
\end{measure}

The index $\mathcal{C}(x,\tau)$ lies in $[0,1]$ and equals one if and only if menu completeness holds on $C(x)$. Its dependence on temperature is only through the effective scale $\kappa/\tau$. Lower values of $\tau$ concentrate probability on tokens with maximal utility, while higher values of $\tau$ move the distribution toward the vocabulary share of $C(x)$. Because the full-vocabulary softmax places positive probability on every token whenever logits are finite, exact menu completeness is unattainable at any positive temperature. The axiom therefore describes an ideal, and the index measures proximity to that ideal. Exact completeness obtains only when the logits outside the menu are masked, or in the zero-temperature limit provided the token with the maximal logit lies in the intended menu. Given feasibility of the intended menu, the next restriction concerns equality. Reflexivity requires labels representing the same alternative to receive the same utility and hence the same choice probability.

\begin{definition}[Reflexivity axiom]\label{def:reflexivity}
  Fix a prompt context $x$ and a choice menu $C(x) \subseteq \mathcal{V}$, and let $\sim_x$ partition tokens in $C(x)$ that represent the same alternative. The model satisfies \emph{exact reflexivity} on $C(x)$ if, for all $i,j \in C(x)$,
  \[
    i \sim_x j
    \quad\Longrightarrow\quad
    V_i(x)=V_j(x)
    \quad\text{and hence}\quad
    P(i \mid x,\tau)=P(j \mid x,\tau).
  \]
\end{definition}

This condition operationalizes the reflexivity axiom as label invariance, in that the utility assigned to an alternative may not depend on the token label under which it is presented. If a language model is indifferent between option A and option B, and if the model is not biased to choose the first option,
then each label should receive one half of the choice probability, which we can observe directly in the logits.

The coordinate space in Figure \ref{fig:axiom_subplots} represents a choice space of portfolios that increase in risk from left to right and have higher expected return moving up the horizontal axis. Thus, relative to the reference portfolio (gold star), points directly above it should be preferable to points below. The color gradient shows the progression of logit differences. Dark blue represents an area where the portfolios are intensely preferred to the reference star because those points offer both a much higher return and lower risk. The solid black line maps an indifference curve, or all the points that are valued equally as the reference point. A risk averse decision maker would exhibit an upward sloping indifference curve in this choice space, as shown in the picture. Panel (a) of Figure \ref{fig:axiom_subplots} shows that if A and B label the same portfolio, then they should receive the same utility regardless of which label is assigned.

\begin{figure}[htbp]
  \centering
  \includegraphics[width=\textwidth,height=0.86\textheight,keepaspectratio]{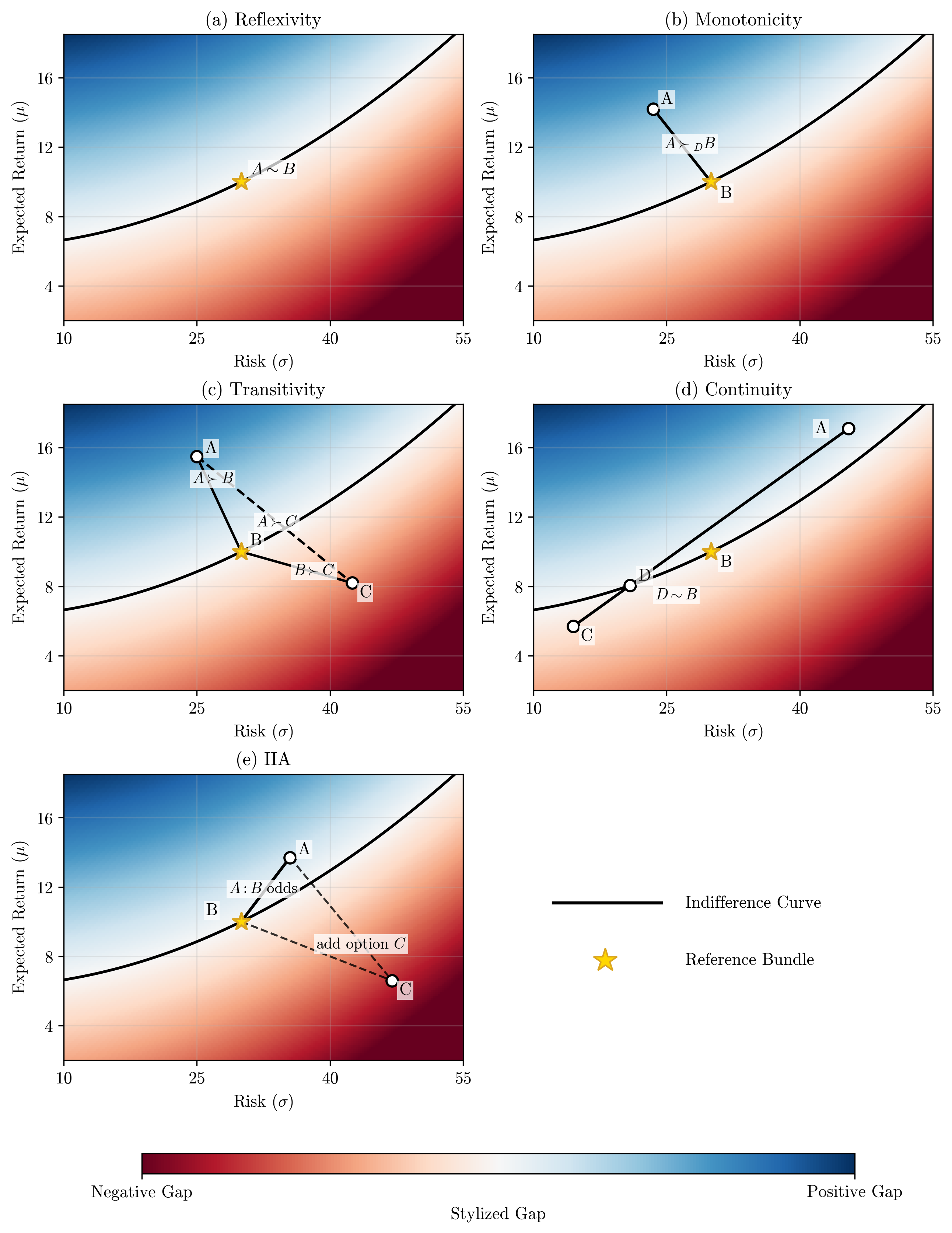}
  \caption{Stylized visual representations of the random utility restrictions evaluated in the single-token forced-choice portfolio problems.}
  \label{fig:axiom_subplots}
\end{figure}

\begin{measure}[Reflexivity index]\label{meas:reflexivity-index}
  Fix $i,j \in C(x)$ such that $i\sim_x j$. The \emph{reflexivity index} for pair $(i,j)$ is
  \[
    \begin{aligned}
      \mathcal{R}(i,j,x,\tau)
      &=
      1
      -
      \frac{
        \left|P(i \mid x,\tau)-P(j \mid x,\tau)\right|
      }{
        P(i \mid x,\tau)+P(j \mid x,\tau)
      } \\
      &=
      \frac{2}{
        1+
        \exp\!\left(
          \frac{\kappa}{\tau}
          \left|V_i(x)-V_j(x)\right|
        \right)
      },
    \end{aligned}
  \]
  with the second equality following from the odds ratio $\frac{P(i \mid x,\tau)}{P(j \mid x,\tau)}=\exp\!\left(\frac{\kappa}{\tau}\bigl[V_i(x)-V_j(x)\bigr]\right)$.
\end{measure}

The index $\mathcal{R}(i,j,x,\tau)$ lies in $[0,1]$ and equals one if and only if $V_i(x)=V_j(x)$, equivalently $P(i \mid x,\tau)=P(j \mid x,\tau)$. Temperature enters only through the scaled utility difference. If $V_i(x)=V_j(x)$, exact reflexivity is invariant to temperature. If $V_i(x)\neq V_j(x)$, lowering $\tau$ increases the scaled gap and drives $\mathcal{R}(i,j,x,\tau)$ toward zero, while raising $\tau$ compresses the gap and moves $\mathcal{R}(i,j,x,\tau)$ toward one.

The next restriction concerns strict improvement. Given an exogenous mean-risk dominance order, monotonicity requires the model's induced order to respect every dominance comparison. Panel (b) of Figure \ref{fig:axiom_subplots} shows that portfolio A has both a higher expected return, $\mu$, and less uncertainty, relative to porfolio B. Therefore, any agent whose preferences are increasing in expected return and decreasing in risk should prefer A to B. (Rationality alone does not deliver this ranking, since a risk-seeking expected utility maximizer may prefer the riskier portfolio.) We use $\succ_D$ to denote this mean-risk dominance relation. The benchmark utility index maintained in Section~\ref{sec:portfolio-choice} satisfies this monotonicity requirement for every positive value of its risk preference parameter on the mean-risk dominance menus used in our design, where the dominating portfolio has higher expected return and lower raw second moment $R$.

\begin{definition}[Monotonicity axiom]\label{def:monotonicity}
  Fix a prompt context $x$ and a menu $C(x) \subseteq \mathcal{V}$, and let $\succ_D$ be an exogenous mean-risk dominance order on $C(x)$. The model satisfies \emph{exact monotonicity} on $C(x)$ if, for all $i,j \in C(x)$,
  \[
    i \succ_D j
    \quad\Longrightarrow\quad
    V_i(x)>V_j(x)
    \quad\text{and hence}\quad
    P(i \mid x,\tau)>P(j \mid x,\tau).
  \]
\end{definition}

\begin{measure}[Monotonicity index]\label{meas:monotonicity-index}
  Fix $i,j \in C(x)$ such that $i \succ_D j$. The \emph{monotonicity index} for pair $(i,j)$ is
  \[
    \begin{aligned}
      \mathcal{M}(i,j,x,\tau)
      &=
      \mathbf{1}\!\bigl[P(i \mid x,\tau)>P(j \mid x,\tau)\bigr] \\
      &=
      \mathbf{1}\!\bigl[V_i(x)>V_j(x)\bigr].
    \end{aligned}
  \]
  Then $\mathcal{M}(i,j,x,\tau) \in \{0,1\}$, with $\mathcal{M}(i,j,x,\tau)=1$ if and only if the model satisfies monotonicity for the dominance pair $(i,j)$.
\end{measure}

Our measure requires that the probability of choosing a preferable portfolio ($i$) be higher than the probability of choosing a less desirable portfolio ($j$). A language model can place positive probability on the mean-risk dominated option without violating the monotonicity axiom because the axiom restricts the ordering of systematic utilities and choice probabilities, not the probability level assigned to each option. For example, considering panel (b) of Figure \ref{fig:axiom_subplots}, the probability assigned to portfolio B can be positive as long as the probability assigned to A is higher. Monotonicity therefore is a pairwise restriction relative to the exogenous order $\succ_D$. It requires the model's induced order to agree with $\succ_D$ on each dominance pair, but it imposes no condition on the joint consistency of distinct pairwise comparisons. The next axiom imposes that consistency condition by requiring the strict relation induced by the model to be closed under composition (represented in panel (c) of Figure \ref{fig:axiom_subplots}).

\begin{definition}[Transitivity axiom]\label{def:transitivity}
  Fix a collection of alternatives $\{a,b,c\}$ and, for each pair $(i,j)$, let $x_{ij}$ denote the prompt context presenting the binary menu over $i$ and $j$. Let $\succ$ be the strict relation induced by the model across these pairwise contexts, where
  \begin{equation}\label{eq:transitivity-relation}
    i \succ j
    \quad\Longleftrightarrow\quad
    V_i(x_{ij})>V_j(x_{ij})
    \quad\Longleftrightarrow\quad
    P(i \mid x_{ij},\tau)>P(j \mid x_{ij},\tau).
  \end{equation}
  The model satisfies \emph{exact transitivity} on the triad $\{a,b,c\}$ if
  \[
    a \succ b
    \quad\text{and}\quad
    b \succ c
    \quad\Longrightarrow\quad
    a \succ c.
  \]
\end{definition}

Within a single prompt context the index $V_i(x)$ assigns a real number to every token, so the ordering induced inside one menu is transitive by construction. The empirical content of the axiom therefore concerns whether the rankings elicited from separate binary menus can be rationalized by a single stable ordering over the underlying alternatives.

\begin{measure}[Transitivity index]\label{meas:transitivity-index}
  For a closed triad $t=\{a,b,c\}$, define
  \[
    \Delta^{V}_{ij}=V_i(x_{ij})-V_j(x_{ij}),
    \qquad
    \Delta^{P}_{ij}=P(i \mid x_{ij},\tau)-P(j \mid x_{ij},\tau)
  \]
  for each pairwise leg $(i,j) \in \{(a,b),(b,c),(a,c)\}$. Since $u_i(x)=\kappa V_i(x)+\zeta(x)$ with $\kappa>0$, the sign of $\Delta^{V}_{ij}$ equals the sign of $\Delta^{P}_{ij}$ and induces the pairwise ordering of $i$ and $j$. The triad is \emph{directional} if the first two legs imply a strict chain,
  \[
    \Delta^{V}_{ab} > 0 \ \wedge\  \Delta^{V}_{bc} > 0,
    \qquad \text{or} \qquad
    \Delta^{V}_{ab} < 0 \ \wedge\  \Delta^{V}_{bc} < 0.
  \]
  A directional triad is \emph{consistent} if the closure leg confirms the implied ranking, namely $\Delta^{V}_{ac}$ has the same sign as $\Delta^{V}_{ab}$ and $\Delta^{V}_{bc}$. For a directional triad $t$, the \emph{transitivity index} is
  \[
    \begin{aligned}
      \mathcal{T}(t)
      &\equiv
      \mathbf{1}\!\left[
        \Delta^{P}_{ab}\Delta^{P}_{bc}>0
        \ \wedge\
        \Delta^{P}_{ab}\Delta^{P}_{ac}>0
      \right] \\
      &=
      \mathbf{1}\!\left[
        \Delta^{V}_{ab}\Delta^{V}_{bc}>0
        \ \wedge\
        \Delta^{V}_{ab}\Delta^{V}_{ac}>0
      \right].
    \end{aligned}
  \]
  Then $\mathcal{T}(t) \in \{0,1\}$, with $\mathcal{T}(t)=1$ if and only if the closure leg confirms the implied ranking. A non-directional triad places no restriction on the closure leg, so $\mathcal{T}(t)$ is scored only for directional triads.
\end{measure}

Panel c of Figure \ref{fig:axiom_subplots}) shows that if A lies above the indifference curve for bundle B (the gold star) and C lies below that indifference curve, then it follows that this decision maker would prefer A to C, if they have rational preferences. Therefore, transitivity is a finite consistency requirement on a closed triad. If the first two comparisons induce a strict chain, then the third comparison must agree with the order obtained by composing the first two.

The next axiom is continuity, which rules out discontinuous jumps in preference. Standard preference continuity implies that, between alternatives ranked on opposite sides of a reference alternative, a continuous path must cross the reference alternative's indifference contour.

\begin{definition}[Continuity axiom]\label{def:continuity}
Fix alternatives $a,b,c$ with $a \succ b \succ c$ under the pairwise relation in Equation~\eqref{eq:transitivity-relation}, and let $p$ denote the continuous interpolation path
\[
  \gamma(t)=(1-t)c+ta,
  \qquad t \in [0,1],
\]
connecting $c$ to $a$ in attribute space. For each $t$, let $x(t)$ denote the prompt context presenting the binary menu over $\gamma(t)$ and the reference alternative $b$. Let $\ell_{\gamma(t)}(x(t))$ and $\ell_b(x(t))$ denote the label tokens in that context that encode those two bundles. The model satisfies \emph{exact continuity} on $p$ if there exists $t^\ast \in (0,1)$ such that
\[
  V_{\ell_{\gamma(t^\ast)}}(x(t^\ast))
  =
  V_{\ell_b}(x(t^\ast)),
\]
and hence
\[
  P(\ell_{\gamma(t^\ast)} \mid x(t^\ast),\tau)
  =
  P(\ell_b \mid x(t^\ast),\tau),
\]
so that $\gamma(t^\ast) \sim b$.
\end{definition}

\begin{measure}[Continuity index]\label{meas:continuity-index}
For a path $p$, define the log odds gap
\[
  g^{P}(t)=
  \log
  \frac{P(\ell_{\gamma(t)} \mid x(t),\tau)}{P(\ell_b \mid x(t),\tau)}.
\]
The population \emph{continuity index} is
\[
  \mathcal{K}(p)
  \equiv
  \mathbf{1}\!\left[
    \exists t^\ast \in (0,1)
    \text{ such that }
    g^{P}(t^\ast)=0
  \right].
\]
Then $\mathcal{K}(p) \in \{0,1\}$, with $\mathcal{K}(p)=1$ if and only if path $p$ intersects the indifference contour of $b$. The population index conditions on the entire interval $[0,1]$ and is therefore not observable from finitely many probed comparisons. The empirical implementation evaluates a grid version of the index. For probed points $0 < t_1 < \cdots < t_m < 1$, define
\[
  \mathcal{K}_G(p)
  \equiv
  \mathbf{1}\!\left[
    \min_{r=1,\ldots,m} g^{P}(t_r)
    \leq 0
    \leq
    \max_{r=1,\ldots,m} g^{P}(t_r)
  \right],
\]
which equals one if and only if the probed log odds bracket zero, providing a finite-grid analogue of the continuity implication.
\end{measure}

Continuity is the corresponding path-existence restriction. It requires an indifference crossing of the reference alternative along the interpolation path, equivalently a zero crossing of the log probability odds. The grid index records whether the probed comparisons bracket such a crossing, without imposing monotonic movement or uniqueness.

The final restriction concerns the stability of pairwise odds when alternatives outside the pair are varied. Within a fixed prompt context, odds invariance is a direct consequence of the model's equivalence to the standard random utility model, where the ratio of choice probabilities for any two options depends solely on their systematic utility difference \citep{luce1959individual}. Across the binary and expanded prompts used below, the restriction becomes an empirical test of whether that systematic utility difference remains stable when the menu changes.

\begin{definition}[Independence of irrelevant alternatives]\label{def:iia}
Fix a prompt context $x$ with $a,b \in C(x)$ and an expanded context $x^c$ with $a,b,c \in C(x^c)$, where $x^c$ differs from $x$ only by adding a strictly mean-risk dominated alternative $c$ such that $a \succ_D c$ and $b \succ_D c$. The model satisfies \emph{exact independence of irrelevant alternatives} for pair $(a,b)$ if
\[
  V_a(x)-V_b(x)
  =
  V_a(x^c)-V_b(x^c)
  \quad\Longleftrightarrow\quad
  \frac{P(a \mid x,\tau)}{P(b \mid x,\tau)}
  =
  \frac{P(a \mid x^c,\tau)}{P(b \mid x^c,\tau)}.
\]
\end{definition}

\begin{measure}[Independence of irrelevant alternatives index]\label{meas:iia-index}
For a mean-risk dominated alternative IIA comparison $q=(a,b,c,x,x^c)$ with $a \succ_D c$ and $b \succ_D c$, define
\[
  \begin{aligned}
    D^{V}(q)
    &=
    \bigl[V_a(x)-V_b(x)\bigr]
    -
    \bigl[V_a(x^c)-V_b(x^c)\bigr], \\
    D^{P}(q,\tau)
    &=
    \log
    \frac{P(a \mid x,\tau)}{P(b \mid x,\tau)}
    -
    \log
    \frac{P(a \mid x^c,\tau)}{P(b \mid x^c,\tau)}.
  \end{aligned}
\]
Since $D^{P}(q,\tau)=(\kappa/\tau)D^{V}(q)$, the \emph{independence of irrelevant alternatives index} for comparison $q$ is
\[
  \begin{aligned}
    \mathcal{I}(q,\tau)
    &\equiv
    \frac{2}{
      1+
      \exp\!\left(
        \left|D^{P}(q,\tau)\right|
      \right)
    } \\
    &=
    \frac{2}{
      1+
      \exp\!\left(
        \frac{\kappa}{\tau}
        \left|D^{V}(q)\right|
      \right)
    }.
  \end{aligned}
\]
Then $\mathcal{I}(q,\tau) \in [0,1]$, with $\mathcal{I}(q,\tau)=1$ if and only if the pairwise log odds of $a$ against $b$ are invariant to the mean-risk dominated alternative $c$.
\end{measure}

Independence of irrelevant alternatives is an odds-invariance restriction for an irrelevant alternative introduced by construction. It requires the pairwise log odds of $a$ against $b$ to be unchanged when the menu is expanded by an alternative $c$ that is strictly mean-risk dominated by both $a$ and $b$.

Together, the six indices state the restrictions under which the random utility representation and the affine logit specification admit a revealed-preference interpretation in one-step token choice. We next apply this framework to the portfolio choice problem and derive the two structural estimators.

\section{Portfolio Choice Problem}\label{sec:portfolio-choice}

The structural framework in Section~\ref{sec:structural-model} connects the model's logit vector to a cardinal utility index $V_i$ through the scale parameter $\kappa$, but leaves the functional form of $V_i$ unrestricted. We now specialize that index to a portfolio choice environment in which each alternative is a one-period asset with observed expected return $\mu_i$ and standard deviation $\sigma_i$, so that the relevant tradeoff between return and risk is transparent, fully observed by the researcher, and directly presented to the model. The researcher then recovers variance and the raw second moment mechanically from the reported moments. This environment gives the structural parameters an economically interpretable domain while keeping the objects of choice under complete experimental control.

In this environment, each prompt context $x_j$ presents a binary feasible portfolio set $C(x_j) = \{A,B\}$, where the two label tokens encode portfolios with observed central moments $(\mu_{Aj}, \sigma_{Aj})$ and $(\mu_{Bj}, \sigma_{Bj})$. Holding the rest of the prompt template fixed while varying these moments across menus gives a direct implementation of the measurement objects from Section~\ref{sec:structural-model} in a setting with economically interpretable alternatives. Consequently, the associated axioms reveal whether language models exhibit preferences consistent with a single utility function. Relatedly, we note the risk preferences recovered in this setting reflect a preference index over a well-defined class of risky allocations, but one that need not summarize behavior outside it. It is well-known that risk preferences are not globally consistent across domains among human subjects \citep{einav2012general, harrison2007naturally, schildberg2018risk}, and we explore the extent to which this is true for language models as well.

\subsection{Benchmark Utility Representation}

Our maintained benchmark is quadratic expected utility, which serves as a useful approximation to the expected utilities of a diverse set of utility functions \citep{markowitz2014mean}. To express this compactly, let $R_i = E[x_i^2]$ denote the raw second moment of portfolio returns. The moments shown in the prompt then imply $R_i = \sigma_i^2 + \mu_i^2$, since $\sigma_i^2 = E[x_i^2] - E[x_i]^2$. Recovering structural risk preference parameters from discrete choices over financial alternatives with known payoff distributions has clear precedent in the empirical literature \citep{mester1996study, choi2007consistency, cohen2007estimating, list2011ceos}.\footnote{Richer alternatives such as prospect theory, probability weighting, and disappointment aversion have been studied extensively \citep{kahneman1979prospect, gul1991theory, prelec1998probability, bruhin2010risk}.} For the label tokens that encode portfolios, we model the corresponding expected utility index as
\begin{equation}\label{eq:eu}
  V_i = \mu_i - \beta R_i,
\end{equation}
where $\beta > 0$ indicates risk aversion, $\beta < 0$ indicates risk seeking, and $\beta = 0$ collapses to expected-return maximization. Under Equation~\eqref{eq:eu}, whether portfolio $i$ is preferred to portfolio $i'$ when $\mu_i > \mu_{i'}$ and $R_i > R_{i'}$ depends entirely on whether $\beta \leq (\mu_i - \mu_{i'})/(R_i - R_{i'})$, so differences in $\beta$ translate directly into different portfolio rankings and, once the model selects among feasible allocations, different economic outcomes.

Substituting Equation~\eqref{eq:eu} into Equation~\eqref{eq:logit-spec} for the portfolio labels maps the observed portfolio moments into the pre-temperature logit index. Specifically, for alternative $i$ in menu $j$,
\begin{equation}\label{eq:logit-moments}
  u_{ij} = \zeta_j + \kappa \mu_{ij} - \kappa \beta R_{ij},
\end{equation}
where the additive term $\zeta_j$ is common within menu $j$ and drops from within-menu comparisons. By imposing strategic variation in $(\mu_{ij}, R_{ij})$ across alternatives and menus, we can identify the structural parameters of interest. We therefore consider two observation regimes using this framework. When raw pre-softmax logits on the menu labels are observed, within-menu logit gaps can be used to identify the structural parameters directly. When only sampled choices are observed, we employ traditional random utility model estimation methods via maximum likelihood. We explore each of these in turn below. Appendix~\ref{app:experimental-design} summarizes the identification conditions and design implications under both observation regimes.

\subsection{Observed-Logit Estimation}

When logits are observed, the data for menu $j$ are the two label logits together with the corresponding portfolio moments. Because the additive menu-specific level in Equation~\eqref{eq:logit-moments} is common to both options, estimation proceeds on the within-menu logit gap and does not require menu fixed effects. An equivalent stacked representation is
\[
  u_{ij} = \theta_{0j} + \alpha \mathbf{1}\{i=A\} + \theta_1 \mu_{ij} + \theta_2 R_{ij} + \varepsilon_{ij}
\]
where $u_{ij}$ is the raw logit for alternative $i$ in menu $j$, $\theta_{0j}$ absorbs the menu-specific level including the common term $\zeta_j$ from Equation~\eqref{eq:logit-moments}, and $\alpha$ captures any systematic additive position bias of the first option labeled A. Position bias is a well-documented phenomenon of current language models, which we discuss in Section \ref{sec:experimental-design} in further detail. With two observations per menu, this stacked equation is algebraically equivalent to the first-difference regression used for estimation.

For a binary menu $j$ with options A and B, within-menu differencing yields
\begin{equation}\label{eq:first-difference}
  \Delta u_j = \alpha + \theta_1 \Delta \mu_j + \theta_2 \Delta R_j + \eta_j
\end{equation}
where $\Delta u_j = u_{Aj} - u_{Bj}$, $\Delta \mu_j = \mu_{Aj} - \mu_{Bj}$, $\Delta R_j = R_{Aj} - R_{Bj}$, and $\eta_j = \varepsilon_{Aj} - \varepsilon_{Bj}$. Estimation therefore runs on the logit gap with a pooled constant and the within-menu differences in moments. Because the dependent variable is the raw pre-temperature logit gap, decoding temperature does not affect the identifying equation or the precision of this estimator. In this sense, the estimator keeps the component of the logit gap that is systematically explained by portfolio attributes and treats the remaining gap as either positional bias, absorbed by $\alpha$ or removed by mirroring, or specification error.

Identification requires that the slope mapping from utility units to logit units be common across menus, that any positional effect enter as a common additive constant, and that the design matrix with rows $(\Delta \mu_j,\Delta R_j)$ have full rank. The reduced-form slopes imply
\[
  \hat{\kappa} = \hat{\theta}_1
\]
and
\[
  \hat{\beta} = -\frac{\hat{\theta}_2}{\hat{\theta}_1}.
\]
Inference can use heteroskedasticity-robust standard errors for $(\hat{\theta}_1,\hat{\theta}_2)$. Since $\hat{\beta}$ is a ratio estimator, inference for $\hat{\beta}$ follows from the Delta method applied to the estimated covariance matrix.

\subsection{Sampled-Choice Estimation}

When only sampled choices are observed, we restrict attention to valid A/B responses and let $y_{jr}=1$ if option A is chosen on valid draw $r$ from menu $j$ and $y_{jr}=0$ if option B is chosen. With $\Delta \mu_j = \mu_{Aj} - \mu_{Bj}$ and $\Delta R_j = R_{Aj} - R_{Bj}$, the binary choice probability is
\begin{equation}\label{eq:revealed-prob}
  p_j \equiv P(y_{jr} = 1 \mid \Delta \mu_j, \Delta R_j)
  = \Lambda(\gamma_1 \Delta \mu_j + \gamma_2 \Delta R_j),
\end{equation}
where $\Lambda(t) = \exp(t) / (1 + \exp(t))$, $\gamma_1 = \kappa / \tau$, and $\gamma_2 = -\kappa\beta / \tau$. Note that varying $\tau$ across menus does not add identifying content because it only rescales the common latent index. Therefore, in the benchmark design we hold $\tau = 1$ fixed across estimation menus, which preserves the native logit scale.

Full-rank variation in $(\Delta \mu_j,\Delta R_j)$ identifies the reduced-form slopes. Choice data therefore identify $\beta = -\gamma_2/\gamma_1$ whenever $\gamma_1 \neq 0$. In the frontier regime (i.e. ChatGPT from OpenAI) of our experiments, let $n_j$ denote the number of valid A/B responses from menu $j$. Let $Y_j=\sum_{r=1}^{n_j} y_{jr}$ denote the number of those responses choosing option A. Estimation maximizes the grouped binomial log-likelihood
\begin{equation}\label{eq:log-likelihood}
  \mathcal{L}(\gamma_1, \gamma_2)
  =
  \sum_{j=1}^J
  \left[
    Y_j \ln p_j
    +
    (n_j-Y_j)\ln(1-p_j)
  \right].
\end{equation}
The omitted binomial coefficient is constant in $(\gamma_1,\gamma_2)$ and therefore does not affect the maximizer. Under these conditions, together with the standard identification and regularity conditions stated above, maximum likelihood yields consistent estimates $\hat{\gamma}_1$ and $\hat{\gamma}_2$ \citep{newey1994large}. Because $\tau$ is exogenously imposed, the structural parameters can be recovered as
\[
  \hat{\kappa} = \tau \hat{\gamma}_1
\]
and
\[
  \hat{\beta} = -\frac{\hat{\gamma}_2}{\hat{\gamma}_1}.
\]
Inference for $\hat{\kappa}$ follows from the inverse Hessian, while for $\hat{\beta}$ the Delta method once again is convenient away from weak-ratio cases.

\section{Experimental Design}\label{sec:experimental-design}

The experiment is built to address two measurement problems. The first is to determine whether a model's single-token choices satisfy the revealed-preference restrictions required for the utility representation defined by our measures of completeness, reflexivity, monotonicity, transitivity, continuity, and independence-of-irrelevant-alternatives formalized in Section~\ref{sec:structural-model}. The second is to estimate the shape of preferences under the quadratic mean-risk benchmark of Section~\ref{sec:portfolio-choice}, recovering the scale parameter $\kappa$ and the risk preference parameter $\beta$.

The unit of observation throughout is a model-menu prompt, a single decision context $x$ that presents a labeled portfolio menu and to which the model responds by placing probability on one label token. What the researcher observes from that prompt differs across the two regimes studied in the paper. For open-weight models we record the raw pre-softmax logits on the label tokens from a single forward pass.\footnote{All open-weight models are accessed through their publicly released weights. The full raw logit vector is recorded at the label token positions before any softmax decoding or sampling rule is applied. No provider-side post-processing intervenes between the final layer output and the recorded logit vector, so the observed-logit regime is exact.} For private frontier models we typically do not observe logits and instead query the model repeatedly under softmax decoding, recording the sampled choices from which empirical choice probabilities are formed.

\subsection{Model Subjects and Elicitation Protocol}

Our subject pool spans representative samples of non-thinking models from the open- and closed-source regimes. Within each regime we include models that were publicly available during the window of our study in June 2026, exposed stable version identifiers, and could be queried under a one-token forced-choice protocol with fixed decoding controls. Table~\ref{tab:model-subjects} lists the resulting sample. Panel~A covers open-weight models, and Panel~B covers private frontier models. Full details on the experimental code, hardware, vendor metadata, prompt and design hashes, and run identifiers are collected in the replication materials in Appendix~\ref{app:experimental-design}.

\begin{table}[ht]
  \centering
  \small
  \resizebox{\textwidth}{!}{
    \begin{tabular}{lllc}
      \toprule
      Model Identifier & Developer & API / Hugging Face Identifier & Size (Params) \\
      \midrule
      \multicolumn{4}{l}{\textit{Panel A. Open-Weight Models}} \\
      \midrule
      Gemma-2-9B & Google & \texttt{google/gemma-2-9b-it} & 9B \\
      Gemma-4-31B & Google & \texttt{google/gemma-4-31B-it} & 31B \\
      Llama-3.1-8B & Meta & \texttt{meta-llama/Llama-3.1-8B-Instruct} & 8B \\
      Llama-3.1-70B & Meta & \texttt{meta-llama/Llama-3.1-70B-Instruct} & 70B \\
      Ministral-3-8B & Mistral AI & \texttt{mistralai/Ministral-3-8B-Instruct-2512} & 8B \\
      Qwen3-8B & Alibaba & \texttt{Qwen/Qwen3-8B} & 8B \\
      Qwen3-14B & Alibaba & \texttt{Qwen/Qwen3-14B} & 14B \\
      \midrule
      \multicolumn{4}{l}{\textit{Panel B. Private Frontier Models}} \\
      \midrule
      GPT-4o & OpenAI & \texttt{gpt-4o-2024-08-06} &  \\
      GPT-4o mini & OpenAI & \texttt{gpt-4o-mini-2024-07-18} &  \\
      Claude 4.5 Haiku & Anthropic & \texttt{claude-haiku-4-5-20251001} &  \\
      Claude 4.6 Sonnet & Anthropic & \texttt{claude-sonnet-4-6} &  \\
      Gemini 2.5 Flash Lite & Google & \texttt{gemini-flash-lite-latest} &  \\
      \bottomrule
    \end{tabular}
  }
  \caption{Model subject pool under evaluation.}
  \label{tab:model-subjects}
\end{table}

The main experiment administers a common menu set to every model. The decision prompt follows a fixed three-part template as presented in Figure~\ref{fig-prompt-template} below. The first part frames the decision as an annual financial portfolio choice and instructs the model to reply with its preferred label from the listed options. The second part presents the options by specifying the expected return and standard deviation of each. The third part is the cue suffix ``I choose Option''. For private frontier models this suffix is appended to the end of the user prompt. For open-weight models it is prefilled at the start of the assistant response block. In both cases the model is induced to emit a single token indicating its choice. In all prompts the labels are the capital letters A and B (and C for the trinary independence menu), which are single tokens in the vocabulary of every model in the subject pool.\footnote{Some tokenizers include leading-space variants of the A and B tokens. When more than one token realization maps to a label, we aggregate at the label level by summing the softmax probabilities of the associated token realizations. The equivalent grouped logit at temperature $\tau$ is $\tau\log\sum_r\exp(u_r/\tau)$ over realizations $r$ of the label. We verify this property before running the experiment and report the token IDs in Appendix~\ref{app:experimental-design}.}\footnote{We explored the potential effects of using these specific option labels by randomly selecting tokens as option labels. We found no meaningful difference in these experimental results.}

\begin{figure}[ht]
  \centering
  \begin{tcolorbox}[
      colback=white,
      colframe=black,
      boxrule=0.6pt,
      sharp corners,
      left=10pt,
      right=10pt,
      top=8pt,
      bottom=8pt,
      fontupper=\small\ttfamily
    ]
    You have two options for your annual financial portfolio.
    Choose the portfolio you prefer and reply with A or B only.

    \vspace{0.5em}
    Option A offers an expected return of [$\mu_A$]\% and a standard deviation\\
    of [$\sigma_A$]\%.

    \vspace{0.5em}
    Option B offers an expected return of [$\mu_B$]\% and a standard deviation\\
    of [$\sigma_B$]\%.

    \vspace{1.2em}
    \hrule
    \vspace{1.2em}

    I choose Option
  \end{tcolorbox}
  \caption{Decision prompt template}
  \label{fig-prompt-template}
\end{figure}

The two observation regimes differ not only in what is recorded but in how choice probabilities are formed. In the open-weight regime, one forward pass per menu yields the exact label logits, and no replication is required. In the frontier regime, choice probabilities are formed either from repeated sampled choices or, where the provider exposes them, from first-token logprobs. Across both regimes we fix the decoding controls at temperature $\tau = 1$ and top-$p = 1$.\footnote{Top-$p$ sampling (also known as nucleus sampling) restricts the candidate token set at each step of decoding to the smallest subset of tokens whose cumulative probability exceeds the threshold $p$ \citep{holtzman2019curious}. Setting top-$p=1$ keeps the entire vocabulary and avoids truncation, which would break the full-vocabulary softmax that underlies the random utility representation in Proposition~\ref{prop:rum} and the indices that rest on it \citep{mcfadden1972conditional, train2009discrete}.}

\subsection{Portfolio Menu Construction and Diagnostic Design}\label{sec:exp-menus}

All menus draw choice options from a discrete grid of portfolio bundles with expected returns $\mu \in \{2, 3, \ldots, 18\}$ percent and standard deviations $\sigma \in \{10, 15, 20, \ldots, 55\}$ percent, yielding $17 \times 10 = 170$ distinct bundles, each with raw second moment $R = \sigma^2 + \mu^2$. This range spans low-return, low-risk profiles through high-return, high-risk profiles, and the spacing is fine enough to generate meaningful variation in the risk-return tradeoff. The $170$ bundles admit $\binom{170}{2} = 14{,}365$ distinct binary pairings, of which $13{,}596$ form the admissible working pool after removing pairs that are degenerate for the relevant first differences. From this pool we strategically construct the diagnostic and estimation menus described below, holding the menu sets constant across all models.

To control for a well known position bias exhibited by many language models \citep{zheng2024large,pezeshkpour2024large,guan2025order}, and for the broader possibility that choices depend on the frame in which a fixed alternative is presented \citep{salant2008f}, every canonical menu is also administered in a mirrored presentation that swaps which option carries each label. The reflexivity placebos described below show that a label's position can carry a systematic logit premium orthogonal to the systematic utility index in Equation~\eqref{eq:logit-spec}. For each menu we therefore form the position-corrected log-odds gap
\begin{equation}\label{eq:position-correction}
  \Delta z = \tfrac{1}{2}\bigl[\Delta u - \Delta u^{\text{mirror}}\bigr],
\end{equation}
where $\Delta u$ and $\Delta u^{\text{mirror}}$ denote the A-minus-B log-odds gap under the canonical and mirrored presentations. Because the positional premium enters both presentations with the same sign while the utility difference reverses, the half difference removes the positional component and retains the systematic component. Unless stated otherwise, the monotonicity, transitivity, continuity, and independence diagnostics are evaluated on these position-corrected gaps. The reflexivity index is computed within a single presentation by construction, since the placebo's two options are identical and the asymmetry is itself the positional signal, and the trinary independence menus use the analogous correction described below.

Table~\ref{tab:diagnostic-battery} summarizes the diagnostic battery by utility axiom, reporting the menu construction used to evaluate each measure and the corresponding open-weight prompt-level observations. Completeness differs from the other diagnostics because it is evaluated on the union of administered binary prompts rather than on a separate menu class. For each open-weight prompt we extract the full logit vector from a single forward pass and compute the completeness index $\mathcal{C}(x,\tau)$ from Measure~\ref{meas:completeness-index} as the softmax mass on the valid label tokens $\{A,B\}$. For frontier models, completeness is the fraction of sampled responses that parse as a valid label.

\begin{table}[htbp]
  \centering
  \footnotesize
  \renewcommand{\arraystretch}{1.15}
  \begin{tabular}{@{}>{\raggedright\arraybackslash}p{0.37\textwidth}>{\raggedright\arraybackslash}p{0.38\textwidth}r@{}}
    \toprule
    Axiom & Menu Construction & Observations\\
    \midrule
    Completeness (Measure~\ref{meas:completeness-index}) & Union of administered binary menus & 7{,}786 \\
    Reflexivity (Measure~\ref{meas:reflexivity-index}) & Placebo binary menus pairing each bundle with itself & 170 \\
    Monotonicity (Measure~\ref{meas:monotonicity-index}) & Strict mean-risk dominance binary menus & 680 \\
    Transitivity (Measure~\ref{meas:transitivity-index}) & Triad-leg binary menus from three-bundle chains & 1{,}080 \\
    Continuity (Measure~\ref{meas:continuity-index}) & Interpolation-path binary menus with endpoints & 660 \\
    Independence (IIA) (Measure~\ref{meas:iia-index}) & Binary and trinary menus adding a dominated Option $C$ & 930 \\
    \bottomrule
  \end{tabular}
  \caption{Diagnostic battery by utility axiom. Menu observations are open-weight prompt-level counts and include mirrored presentations where used. Transitivity is scored only for \emph{directional} triads, namely those whose first two legs induce a strict chain. Continuity diagnostic conditions on the model's own ranking of the path endpoints.}
  \label{tab:diagnostic-battery}
\end{table}

The diagnostic menus are designed to isolate distinct failures of revealed preference (see Figure \ref{fig:axiom_subplots} for reference). Placebo menus hold the portfolio fixed and therefore attribute any A-versus-B asymmetry to label position. Dominance menus test whether the model respects unambiguous mean-risk improvements. Triad menus test whether pairwise rankings close consistently across three bundles, while interpolation menus ask whether rankings vary continuously along a path between endpoint bundles. Independence menus add a dominated third option to test whether the relative odds between the two focal portfolios are stable to irrelevant menu expansion.

\subsection{Structural Estimation and Implementation}\label{sec:exp-estimation}

Our quadratic utility benchmark identifies preferences from variation in the within-menu differences $\Delta\mu$ and $\Delta R$. Consequently, the estimation menus are selected to maximize the statistical efficiency of these two regressors. After enumerating the admissible binary pairs from the $170$-bundle grid, we exclude identical bundles and pairs in which one option strictly mean-risk dominates the other, and from the remainder we select menus using a greedy forward D-optimal algorithm that sequentially maximizes the determinant of the information matrix over $(\Delta\mu, \Delta R)$ \citep{kiefer1959optimum,mitchell2000algorithm,rose2009constructing}.
\footnote{The D-optimal criterion is invoked here, for the parametric estimation, rather than for the axiom diagnostics, because its purpose is to break the collinearity between the return and risk differences and sharpen identification of the structural slopes.}
The open-weight estimation sample comprises $680$ D-optimal canonical menus together with their $680$ mirrors. The frontier estimation sample uses $400$ canonical estimation menus, administered with mirrors for $800$ total presentations.

For open-weight models, the data for each menu are the two label logits, and estimation proceeds on the position-corrected within-menu logit gap. We fit the first-difference regression in Equation~\eqref{eq:first-difference} with paired corrected differencing, which removes the additive positional advantage $\alpha$ before recovering $\hat{\kappa} = \hat{\theta}_1$ and $\hat{\beta} = -\hat{\theta}_2/\hat{\theta}_1$. Inference for the ratio follows from the Delta method \citep{greene2003econometric}. For frontier models, we observe sampled choices rather than logits and estimate the binary logit index in Equation~\eqref{eq:revealed-prob} by maximizing the Bernoulli log-likelihood in Equation~\eqref{eq:log-likelihood}, recovering $\hat{\kappa}$ and $\hat{\beta}$ from the reduced-form slopes as in Section~\ref{sec:portfolio-choice}. For frontier models, choice probabilities are estimated from repeated queries, with replication counts varying by provider and menu class.\footnote{Structural estimation menus and sampling-based diagnostics receive the highest replication, while Google diagnostics that expose first-token logprobs require only a single query. Some large diagnostic pools are capped for cost control; Appendix~\ref{app:experimental-design} reports the full schedule.}

Design construction and query scheduling use fixed seeds (design and trial seeds set to $42$), and the menu sets are held constant across all subjects. Open-weight logits are extracted deterministically in a single forward pass per menu, so decoding temperature does not enter the open-weight identifying equation. Each query is issued in an isolated session to prevent cross-prompt contamination, and we assess sensitivity to prompt wording through the prompt-variant robustness checks reported in Appendix~\ref{app:robustness}.

For the open-weight observed-logit estimator we report heteroskedasticity-robust standard errors and treat the residual structure as a specification diagnostic, since the logits are deterministic given the prompt and any dispersion reflects departures from the maintained affine specification rather than sampling noise. For the frontier sampled-choice estimator, by contrast, the repeated draws carry a genuine sampling interpretation, and inference follows from the likelihood in the usual way.

\section{Results}\label{sec:results}

We organize the results in two parts. We first evaluate each model against the six axiom diagnostics defined in Section~\ref{sec:structural-model} (completeness, reflexivity, monotonicity, transitivity, continuity, and independence of irrelevant alternatives) using the corresponding indices $\mathcal{C}$, $\mathcal{R}$, $\mathcal{M}$, $\mathcal{T}$, $\mathcal{K}_G$, and $\mathcal{I}$ computed from the observed logit vectors, and report analogous sampled-choice diagnostics for the frontier runs where available. These diagnostics establish whether each model's single-token choices satisfy the conditions for a stable utility-theoretic interpretation before any structural parameters are estimated. We then turn to the structural estimation of the risk preference parameter $\beta$ and the scale parameter $\kappa$ under the maintained benchmark utility $V_i = \mu_i - \beta R_i$.

\subsection{Rationality Diagnostics}

Table~\ref{tab:rationality-diagnostics} reports the model-level averages of each diagnostic index across the open-weight and frontier diagnostic runs. Four patterns stand out. First, completeness is near-perfect for most models. Qwen~3, Gemma~4, and Ministral~3 concentrate effectively all softmax mass on the intended label tokens, with $\mathcal{C} > 0.999$. The Llama~3.1 family is the exception, with the 70B variant allocating roughly one quarter of its probability mass to tokens outside the menu.\footnote{This result is robust to variations in the prompting of the model. For example, some models are trained to produce output in markdown format, which can lead to leading characters for emphasis (e.g. ``*'' or other symbols in front of the labels).} Second, reflexivity reveals substantial heterogeneity. Models that achieve high completeness do not necessarily treat identical portfolios symmetrically. Qwen~3 (14B) and Gemma~4 exhibit near-zero reflexivity. Llama~3.1 (8B Instruct) shows the strongest adherence to the reflexivity axiom among the base models, achieving $\mathcal{R} = 0.836$. All but one of the open weight models we tested (Gemma 4) expressed varying degrees of preference for the first option in the menu, which had to be controlled for in the structural estimation procedure that follows. It is only when the portfolios are exactly the same that the models indulge this position bias, thus there is no economic consequence for this violation of the reflexivity axiom because, regardless of the label that is chosen, the final allocation is the same. See Table \ref{tab:position-bias-difficulty} in the appendix for more details on this result.  Third, monotonicity is satisfied at ceiling across the entire subject pool, and continuity is satisfied at ceiling wherever the premise-verified continuity index is available.

\begin{table}[htbp]
  \centering
  \small
  \resizebox{\textwidth}{!}{
    \begin{tabular}{lcccccc}
      \toprule
      Model & Complete & Reflexive & Continuity & Monotonicity & Transitive & IIA \\
      \midrule

      Qwen 3 (14B) & 1.000 & 0.001 & 1.000 & 1.000 & 0.983 & 0.322 \\
      Qwen 3 (8B) & 1.000 & 0.014 & 1.000 & 1.000 & 0.957 & 0.644 \\
      Gemma 4 (31B) & 1.000 & 0.000 & 1.000 & 1.000 & 0.968 & 0.122 \\
      Ministral 3 (8B) & 1.000 & 0.088 & 1.000 & 1.000 & 0.994 & 0.824 \\
      Gemma 2 (9B) & 0.996 & 0.797 & 1.000 & 1.000 & 0.986 & 0.327 \\
      Llama 3.1 (8B) & 0.902 & 0.836 & 1.000 & 1.000 & 0.994 & 0.920 \\
      Llama 3.1 (70B) & 0.758 & 0.269 & 1.000 & 1.000 & 1.000 & 0.763 \\
      \midrule
      Claude 4.5 Haiku & 0.970 & 0.072 & 1.000 & 1.000 & 0.857 & 0.498 \\
      Claude 4.6 Sonnet & 0.964 & 0.000 & 1.000 & 1.000 & 1.000 & 0.503 \\
      GPT-4o & 0.998 & 0.009 & 1.000 & 1.000 & 1.000 & 0.301 \\
      GPT-4o Mini & 1.000 & 0.008 & 1.000 & 1.000 & 1.000 & 0.586 \\
      Gemini 2.5 Flash Lite & 1.000 & 0.000 & 1.000 & 1.000 & 0.987 & 0.802 \\

      \bottomrule
    \end{tabular}
  }
  \caption{Rationality diagnostics for open-weight and frontier models}
  \label{tab:rationality-diagnostics}
\end{table}

These results indicate that the models uniformly respect stochastic dominance and exhibit smooth, monotone movement of the log-odds along interpolation paths. Transitivity is high throughout, ranging from $0.957$ for Qwen~3 (8B) to $1.000$ for Llama~3.1 (70B) and the OpenAI frontier models, implying that preference cycles are rare but not entirely absent among directional triads.

Fourth, independence of irrelevant alternatives is the weakest axiom across the entire subject pool. No model approaches the ceiling scores observed for monotonicity or continuity. The index $\mathcal{I}$ ranges from $0.122$ for Gemma~4 to $0.920$ for Llama~3.1 (8B Instruct), indicating that adding a strictly dominated third option systematically shifts the log-odds between the original pair. The IIA scores covary strongly with reflexivity. Models exhibiting severe positional label bias, such as Gemma~4 ($\mathcal{R} = 0.000$, $\mathcal{I} = 0.122$) and Qwen~3 14B ($\mathcal{R} = 0.001$, $\mathcal{I} = 0.322$), show the largest violations. This suggests that the same positional sensitivity that prevents symmetric treatment of identical options also makes the pairwise odds fragile to menu expansion.

\bigskip
\noindent\textbf{Finding 1:} The open-weight and frontier models in our sample largely adhere to our measures on the axioms of utility theory. However, we find that a position bias emerges among nearly identical options which leads to violations in reflexivity. Models also do not adhere to the independence of irrelevant alternatives assumption.
\smallskip

\subsection{Structural Estimation}

Having established which axiom conditions each model satisfies, we now estimate the structural parameters of the maintained quadratic expected utility $V_i = \mu_i - \beta R_i$. Table~\ref{tab:structural-estimates} reports estimates of the scale parameter $\hat{\kappa}$ and the risk aversion parameter $\hat{\beta}$. The open-weight rows use the observed-logit estimator in Equation~\eqref{eq:first-difference}, with paired corrected differencing that removes the additive positional bias $\alpha$ before fitting. The frontier rows use the sampled-choice maximum likelihood estimator. Heteroskedasticity-robust standard errors appear in parentheses, and inference for $\hat{\beta}$ follows from the Delta method applied to the ratio $-\hat{\theta}_2 / \hat{\theta}_1$.

\begin{table}[htbp]
  \centering
  \small
  \begin{tabular}{lccc}
    \toprule
    Model & $\hat{\kappa}$ & $\hat{\beta}$ & (Pseudo) R$^2$ \\
    \midrule

    Qwen 3 (14B) & 0.632 & 0.003 & 0.936 \\
    & (0.006) & (0.000) & \\
    Qwen 3 (8B) & 0.311 & 0.003 & 0.889 \\
    & (0.004) & (0.000) & \\
    Gemma 4 (31B) & 1.482 & 0.004 & 0.977 \\
    & (0.008) & (0.000) & \\
    Ministral 3 (8B) & 0.098 & 0.011 & 0.897 \\
    & (0.004) & (0.000) & \\
    Gemma 2 (9B) & 0.249 & 0.008 & 0.911 \\
    & (0.006) & (0.000) & \\
    Llama 3.1 (8B) & 0.039 & 0.009 & 0.886 \\
    & (0.002) & (0.000) & \\
    Llama 3.1 (70B) & 0.162 & 0.005 & 0.939 \\
    & (0.002) & (0.000) & \\
    \midrule
    Claude 4.5 Haiku & 0.393 & 0.004 & 0.634 \\
    & (0.006) & (0.000) & \\
    Claude 4.6 Sonnet & 0.284 & 0.005 & 0.425 \\
    & (0.006) & (0.000) & \\
    GPT-4o & 0.381 & 0.007 & 0.822 \\
    & (0.046) & (0.000) & \\
    GPT-4o Mini & 0.488 & 0.004 & 0.750 \\
    & (0.058) & (0.000) & \\
    Gemini 2.5 Flash Lite & 1.044 & 0.004 & 0.886 \\
    & (0.192) & (0.000) & \\
    \bottomrule
  \end{tabular}
  \caption{Structural estimates of scale $\kappa$ and risk aversion $\beta$. $R^2$ is reported for observed-logit open-weight estimates and pseudo-$R^2$ for sampled-choice frontier estimates.}
  \label{tab:structural-estimates}
\end{table}

Every model in the subject pool yields a positive and statistically significant estimate of $\beta$, indicating that all models exhibit risk aversion. The point estimates vary by a factor of nearly four, ranging from $\hat{\beta} = 0.0031$ for Qwen~3 (8B) to $\hat{\beta} = 0.0114$ for Ministral~3 (8B Instruct). This range implies economically meaningful variation in risk attitudes across model families. Models at the upper end penalize risk more aggressively and will forgo higher expected returns in favor of lower dispersion at comparatively modest risk differentials.

Every model yields a strictly positive estimate for the scale parameter $\hat{\kappa}$, confirming that they naturally interpret the value ordering of the portfolios in the experiment by preferring higher expected returns. If, in contrast, $\kappa$ estimates had been negative then models would have preferred portfolios with lower expected returns, \emph{ceteris paribus}. The magnitude of $\hat{\kappa}$ exhibits wide cross-model variation. Gemma~4 (31B) produces the largest estimate at $\hat{\kappa} = 1.482$, implying that a one-unit increase in the cardinal utility index moves the pre-softmax logit by nearly 1.5 units. By contrast, Llama~3.1 (8B) yields $\hat{\kappa} = 0.039$, roughly forty times smaller.

To provide a richer visualization of these preferences beyond the point estimates in Table~\ref{tab:structural-estimates}, we plot each model's structural $\hat{\beta}$ estimate (except Ministral) in the $\mu$--$\sigma$ plane. Starting from a base bundle with expected return $\mu = 10$ and risk $\sigma = 30$, we sweep over the dense grid of alternative portfolios from our experimental protocol and record the position-corrected logit gap between each grid portfolio and the base. The yellow dashed line overlays the indifference curve implied by the mean-variance structural estimates, and the heatmap colors encode the sign and magnitude of the logit difference, with blue regions preferred to the base and red regions dispreferred.

\begin{figure}[htbp]
  \centering
  \begin{subfigure}{0.43\textwidth}
    \includegraphics[width=\linewidth]{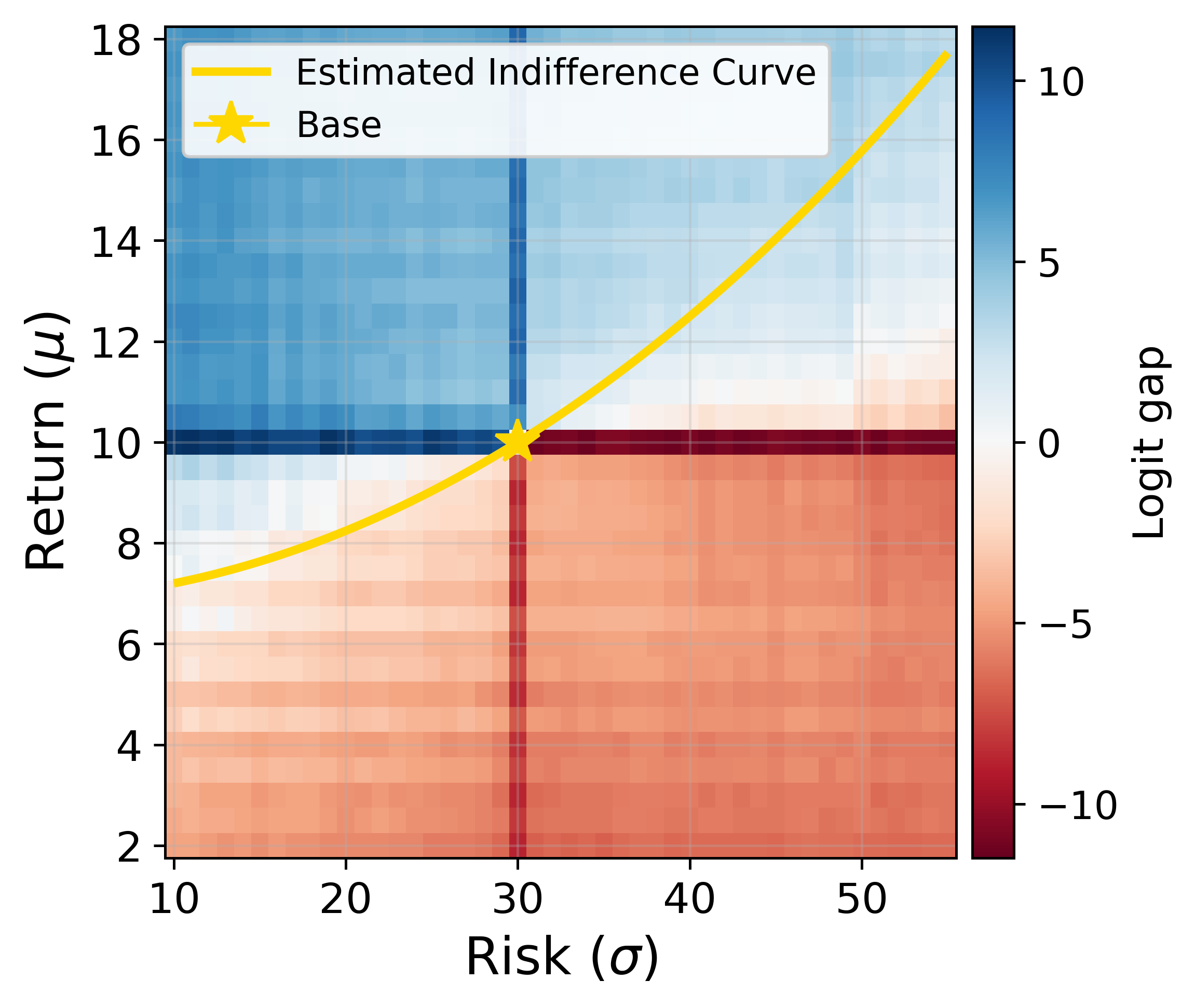}
    \caption{Qwen 3 (14B)}
  \end{subfigure}\hfill
  \begin{subfigure}{0.43\textwidth}
    \includegraphics[width=\linewidth]{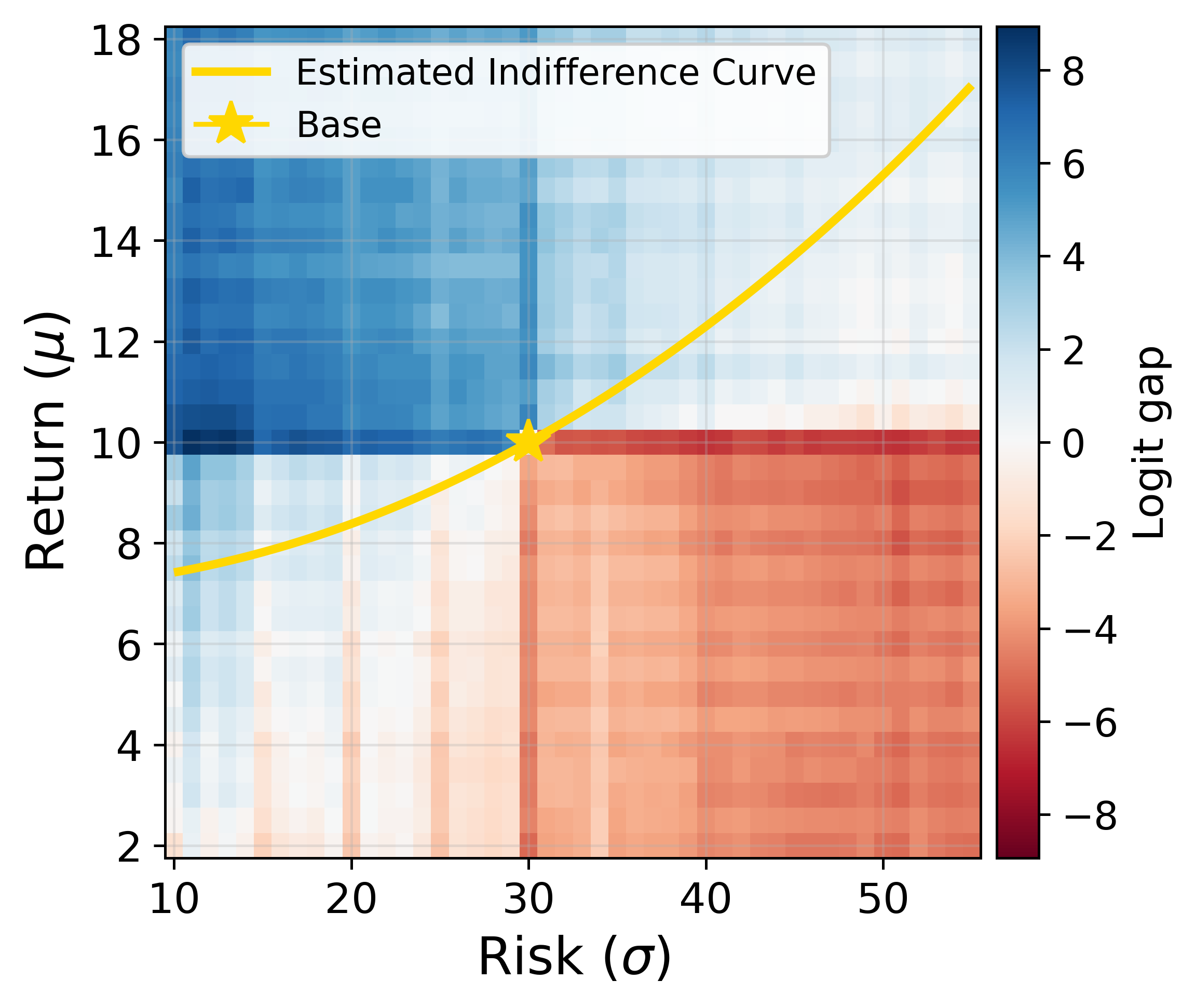}
    \caption{Qwen 3 (8B)}
  \end{subfigure}

  \vspace{0.5em}
  \begin{subfigure}{0.43\textwidth}
    \includegraphics[width=\linewidth]{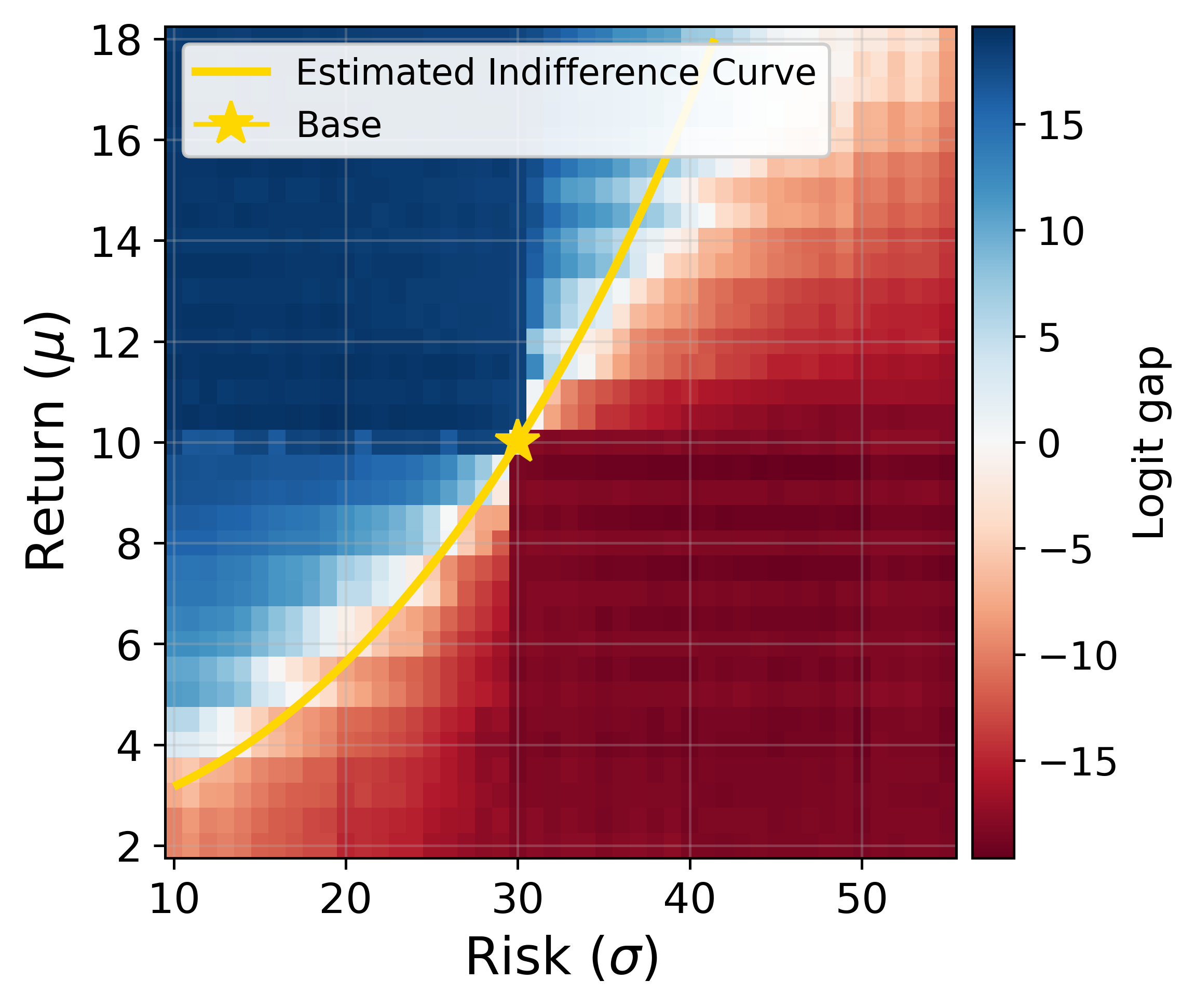}
    \caption{Gemma 4 (31B Instruct)}
  \end{subfigure}\hfill
  \begin{subfigure}{0.43\textwidth}
    \includegraphics[width=\linewidth]{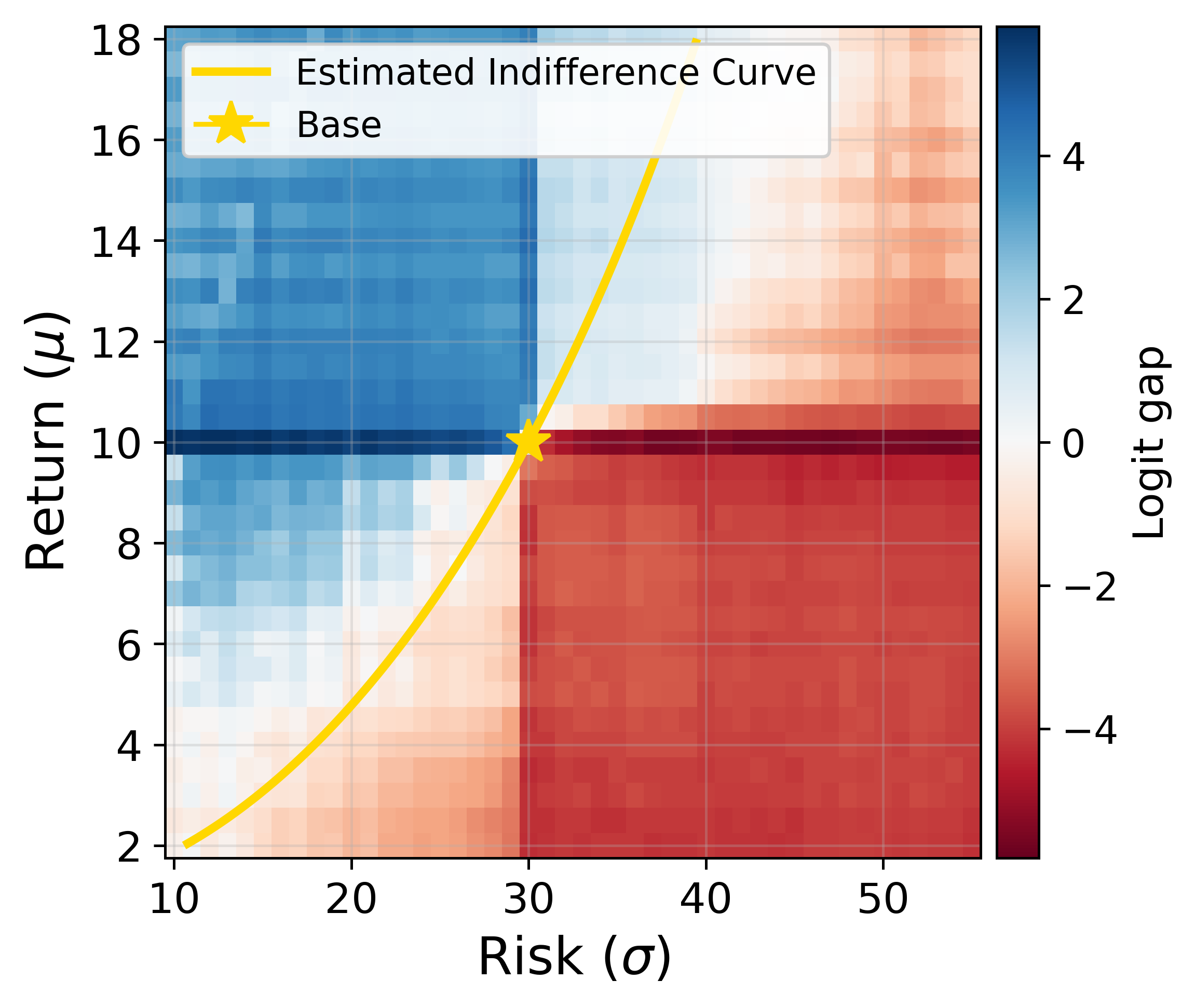}
    \caption{Gemma 2 (9B Instruct)}
  \end{subfigure}

  \vspace{0.5em}
  \begin{subfigure}{0.43\textwidth}
    \includegraphics[width=\linewidth]{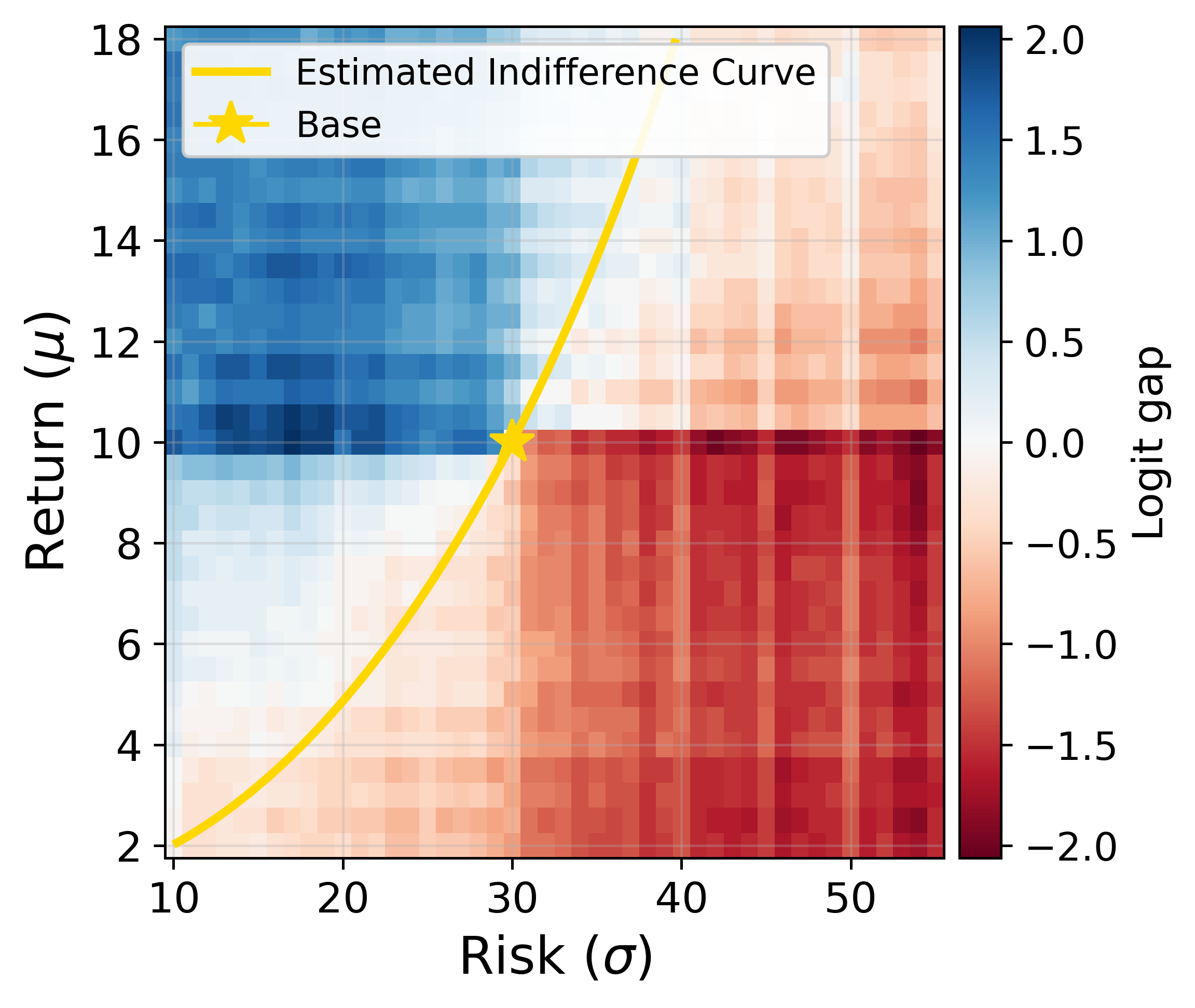}
    \caption{Llama 3.1 (8B Instruct)}
  \end{subfigure}\hfill
  \begin{subfigure}{0.43\textwidth}
    \includegraphics[width=\linewidth]{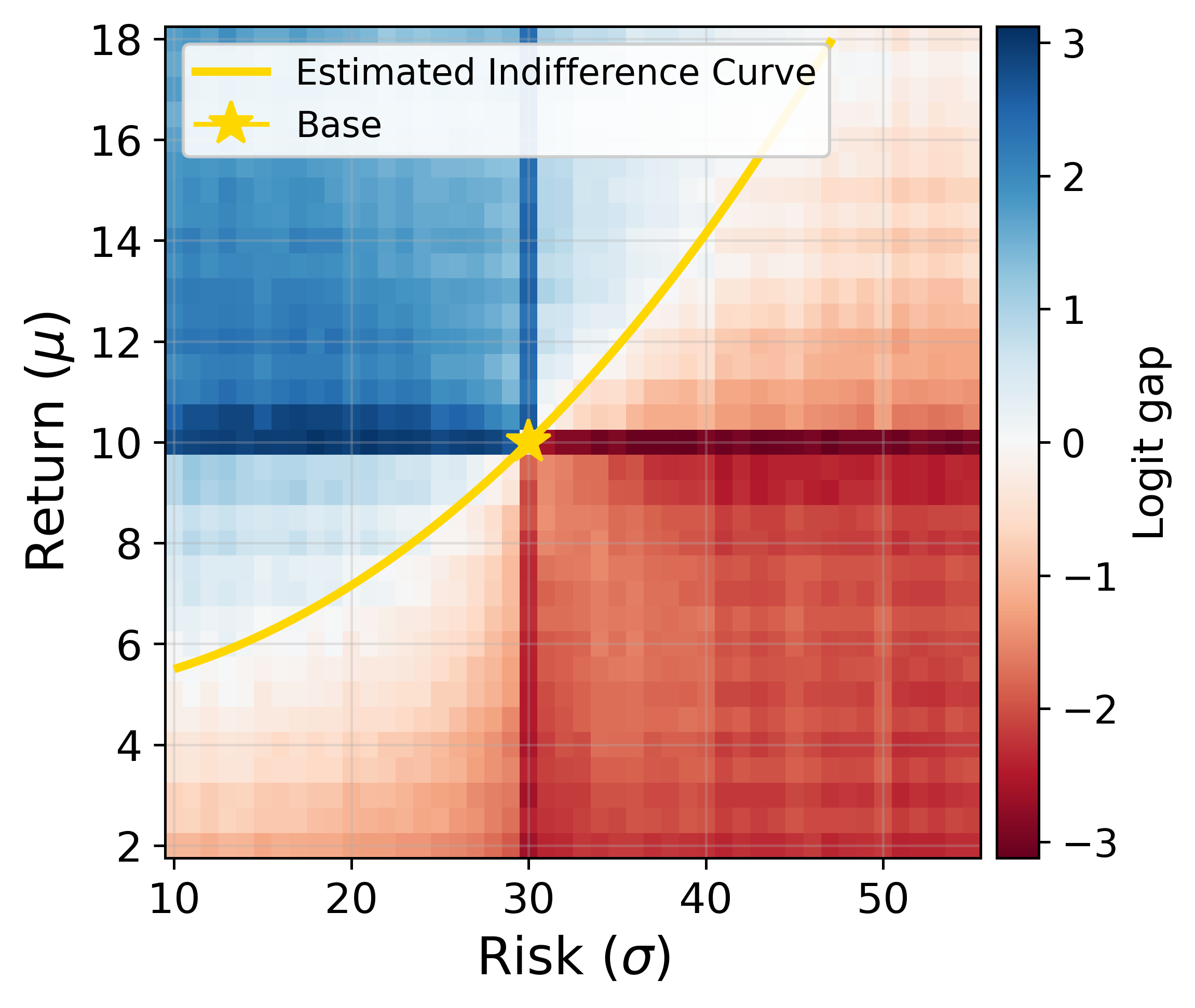}
    \caption{Llama 3.1 (70B Instruct)}
  \end{subfigure}

  \caption{Empirical indifference curves for open-weight models mapped over the portfolio space.}
  \label{fig:indifference-curves}
\end{figure}

Several patterns emerge from these plots. All six models produce upward-sloping indifference curves, confirming that higher risk must be compensated by higher expected return. The curves are roughly convex in the $\mu$--$\sigma$ plane, consistent with the quadratic utility specification maintained in the structural estimation. However, the steepness of the curves varies considerably across models, reflecting the heterogeneity in $\hat{\beta}$ documented above. Models with larger risk aversion estimates, such as Gemma~2 (9B), trace out steeper curves that demand more return compensation per unit of additional risk. Models with smaller $\hat{\beta}$, such as Qwen~3 (14B), produce flatter curves that tolerate substantially more risk for modest return increments.

\bigskip
\noindent\textbf{Finding 2:} All open-weight and frontier models in our sample exhibit varying degrees of risk aversion that can be approximated
by a quadratic utility function.
\smallskip

\subsection{Inducing Preferences in Language Models}\label{subsec:fine-tuning}

Can specific economic preferences be induced into the language model? This question matters because a principal who delegates an allocation decision to a language model delegates authority over the tradeoff that the instruction leaves unresolved. If the model's default risk attitude differs from the principal's, then the delegated choice implements the model's objective rather than the principal's. The preceding results show that these default preferences can be measured. We now ask whether the preference-misalignment component of this agency problem can also be addressed directly by inducing the model to reveal a target economic preference in subsequent choices.

We study this question by fine-tuning Llama~3.1 (8B Instruct) toward two principal-specified risk preferences. The principal's desired ordering over portfolios is represented by the same maintained utility index,
\[
  V_i^* = \mu_i - \beta^* R_i,
\]
where $\beta^*$ is chosen by the principal. We consider two target specifications. The first is approximately risk neutral, with $\beta^* = 0.0001$. The second is more risk averse, with $\beta^* = 0.0045$. These targets are not prompts asking the model to describe a preference. They are the preferences the delegated agent is trained to implement in its own choice probabilities.\footnote{This differs from prompt engineering. Prompt engineering conditions the model on instructions supplied in the input context and leaves the model weights unchanged, so any induced behavior persists only while that context is present. Fine-tuning updates the weights that generate the choice rule itself, allowing the principal-specified preference to be embedded in the model's revealed choice probabilities.}

For each training menu, we compute the utility difference implied by the principal's target preference and translate it into the log odds that the delegated model should assign to the two options.
\begin{equation}\label{eq:principal-target-odds}
  \Delta z_m^*
  \equiv
  \log\frac{P^*(A \mid m)}{P^*(B \mid m)}
  =
  \kappa\left(V_A^* - V_B^*\right)
  =
  \kappa\left(\Delta\mu_m - \beta^* \Delta R_m\right).
\end{equation}
For the same menu, let $\Delta z_m(\theta)$ denote the model's actual logit gap between the two answer labels, where $\theta$ collects the model weights. The preference component of fine-tuning chooses weights to close the distance between the actual logit gap and the target logit gap,
\[
  \mathcal{L}(\theta)
  =
  \frac{1}{M}\sum_{m=1}^{M}
  \left(\Delta z_m(\theta) - \Delta z_m^*\right)^2 .
\]
Minimizing $\mathcal{L}(\theta)$ updates the model toward the principal's choice rule. If the model places too much relative logit weight on option $A$, the residual is positive and the objective rewards changes that reduce the $A$ versus $B$ gap. If the model places too little relative logit weight on option $A$, the objective rewards changes that increase that gap. Repeating this comparison across the training menus moves the model's logit surface toward the logit surface implied by the principal's utility. The scale parameter $\kappa$ governs the strength of the probabilistic response but does not determine the recovered risk preference, since $\hat{\beta}$ is identified by the ratio of the risk and return slopes. Appendix~\ref{app:fine-tuning} gives the formal multi-term objective, the training corpus, the logit-scale choice, and the implementation details.

After fine-tuning, we evaluate the delegated models exactly as before, on held-out menus not used in training. The test is whether the model now reveals the principal's target preference rather than the base model's default preference, and whether that induced preference passes the same diagnostic battery. Table~\ref{tab:rationality-diagnostics-fine-tuned} reports the rationality diagnostics. Relative to the baseline Llama 3.1 (8B Instruct), both induced-preference models substantially improve completeness and reflexivity, with completeness above $0.99$ and reflexivity above $0.98$. Monotonicity remains perfect in both cases. The near risk-neutral target achieves perfect transitivity and raises IIA to $0.9484$, while the more risk-averse target maintains high diagnostic performance, with IIA equal to $0.8000$ and transitivity equal to $0.9583$.

\begin{table}[htbp]
  \centering
  \small
  \resizebox{\textwidth}{!}{
    \begin{tabular}{lcccccc}
      \toprule
      Model & Complete & Reflexive & Continuity & Monotonicity & Transitive & IIA \\
      \midrule
      Llama 3.1 (8B, $\beta=0.0001$) & 0.9961 & 0.9941 & 0.9333 & 1.0000 & 1.0000 & 0.9484 \\
      Llama 3.1 (8B, $\beta=0.0045$) & 0.9914 & 0.9870 & 1.0000 & 1.0000 & 0.9583 & 0.8000 \\
      \bottomrule
    \end{tabular}
  }
  \caption{Rationality diagnostics for fine-tuned models}
  \label{tab:rationality-diagnostics-fine-tuned}
\end{table}

Table~\ref{tab:structural-estimates-fine-tuned} reports the structural estimates and Figure~\ref{fig:indifference-curves-fine-tuned} plots the implied indifference curves over the portfolio space. The induced models reveal the principal-specified preferences with high precision. The near risk-neutral specification targets $\beta^* = 0.0001$, and the fine-tuned model yields $\hat{\beta} = 0.000087$ with a standard error of $0.000003$. The more risk-averse specification targets $\beta^* = 0.0045$, and the fine-tuned model yields $\hat{\beta} = 0.004560$ with a standard error of $0.000017$. In both cases, the target value lies within the 95\% confidence interval. The scale parameter estimates are similarly precise. The low-$\beta$ model yields $\hat{\kappa} = 1.001301$ with a standard error of $0.000720$, and the high-$\beta$ model yields $\hat{\kappa} = 0.718940$ with a standard error of $0.002206$. The restricted quadratic surface provides a tight fit to the indifference data, producing $R^2$ values of $0.9992$ and $0.9935$.

The economic interpretation is that the principal need not accept the model's default risk attitude as an unavoidable agency cost within this portfolio environment. A principal can specify a target utility function, fine-tune the model toward the choice odds implied by that utility, and then verify out of sample that the delegated model reveals the intended preference. The exercise therefore addresses the preference-misalignment part of the principal-agent problem studied here. The agent's objective can be measured, deliberately shifted, and audited with the same structural tools.

\begin{table}[htbp]
  \centering
  \small
  \begin{tabular}{lccc}
    \toprule
    Model & $\hat{\kappa}$ & $\hat{\beta}$ & R$^2$ \\
    \midrule
    Llama 3.1 (8B, $\beta=0.0001$) & 1.001301 & 0.000087 & 0.999 \\
    & (0.000720) & (0.000003) & \\
    Llama 3.1 (8B, $\beta=0.0045$) & 0.718940 & 0.004560 & 0.994 \\
    & (0.002206) & (0.000017) & \\
    \bottomrule
  \end{tabular}
  \caption{Structural estimates of scale $\kappa$ and risk aversion $\beta$ for fine-tuned models.}
  \label{tab:structural-estimates-fine-tuned}
\end{table}

\begin{figure}[htbp]
  \centering
  \begin{subfigure}{0.43\textwidth}
    \includegraphics[width=\linewidth]{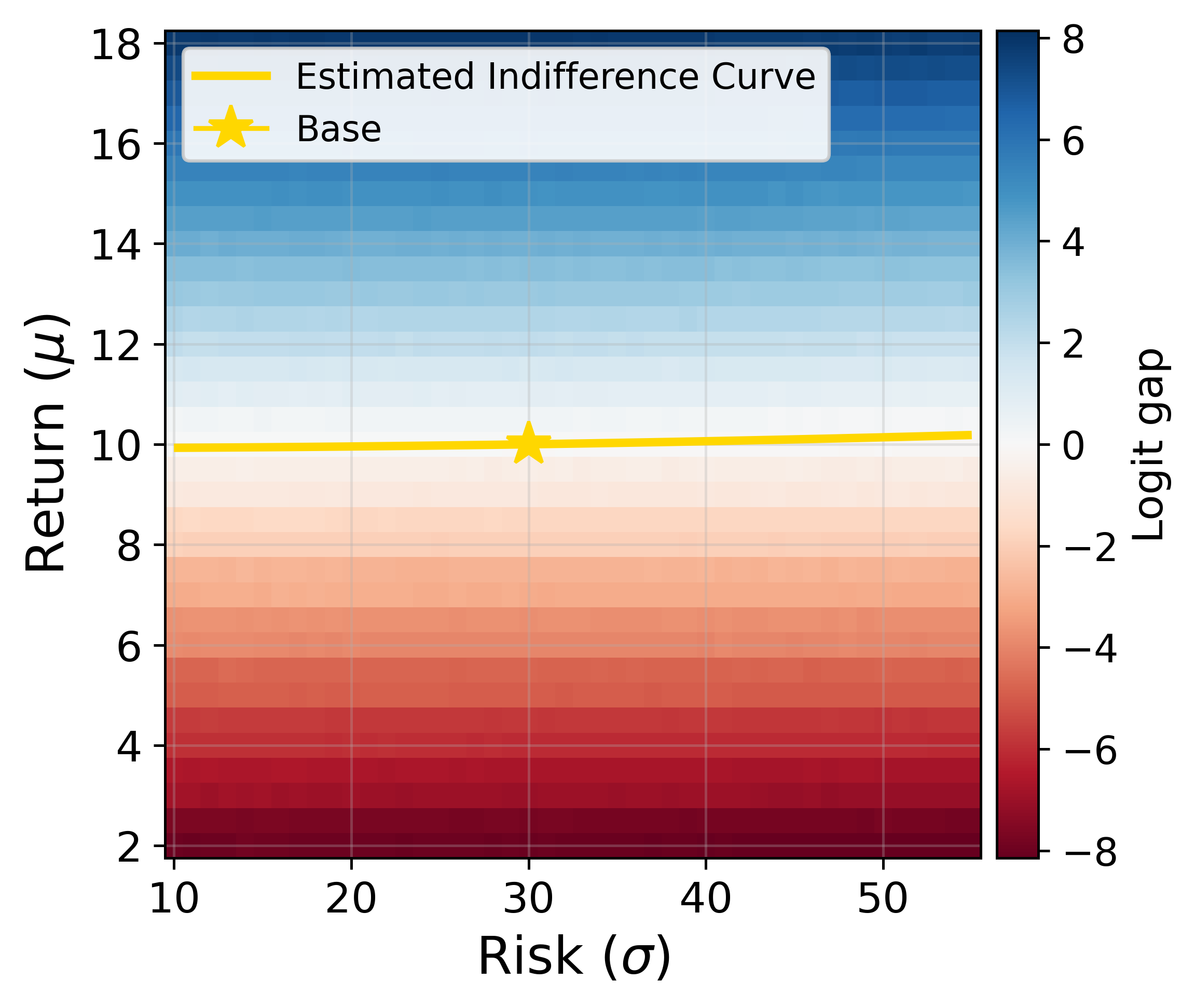}
    \caption{Llama 3.1 (8B $\beta=0.0001$)}
  \end{subfigure}\hfill
  \begin{subfigure}{0.43\textwidth}
    \includegraphics[width=\linewidth]{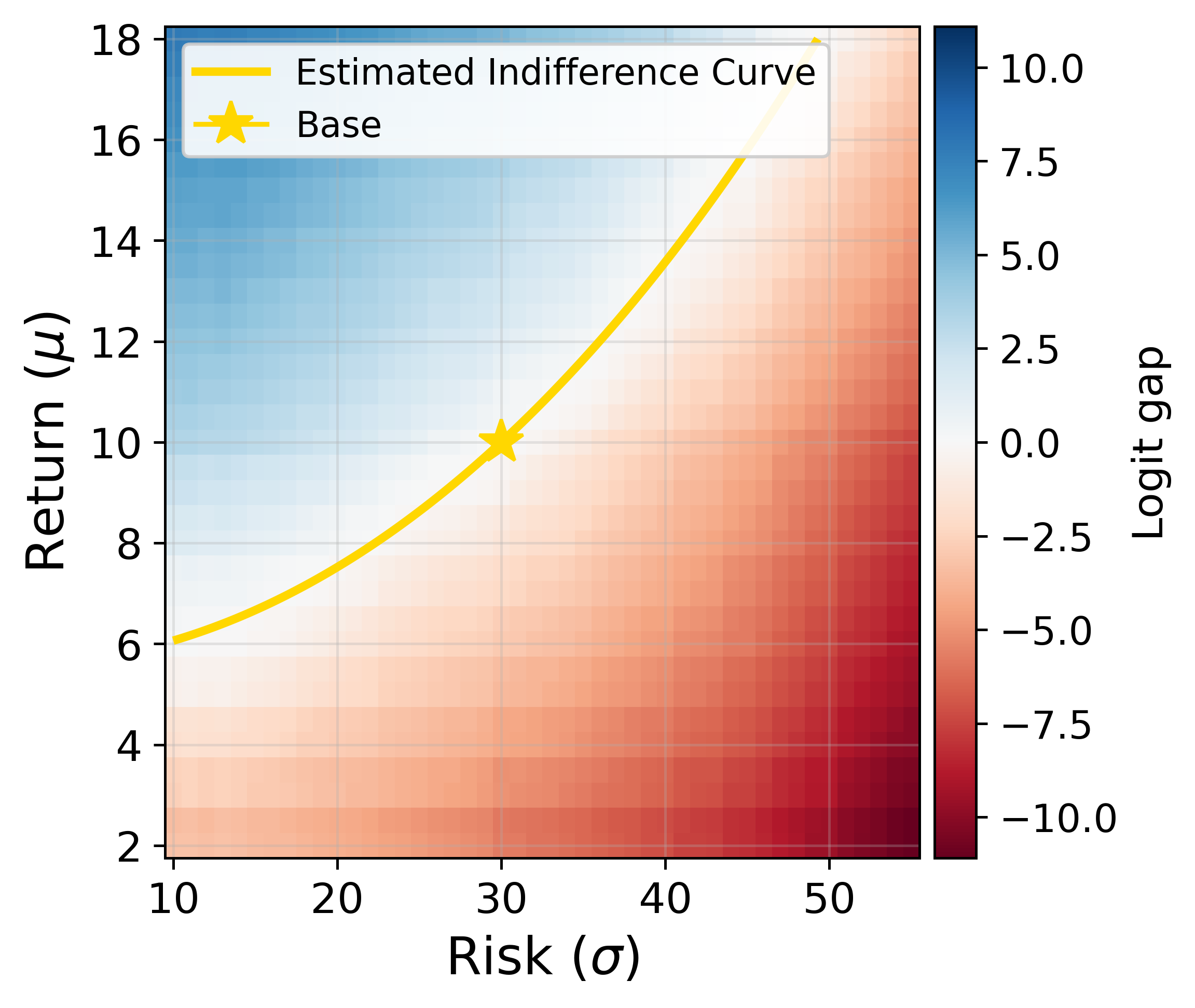}
    \caption{Llama 3.1 (8B $\beta=0.0045$)}
  \end{subfigure}
  \caption{Empirical indifference curves for fine-tuned models mapped over the portfolio space.}
  \label{fig:indifference-curves-fine-tuned}
\end{figure}

\bigskip
\noindent\textbf{Finding 3:}  A principal can install its preferred risk attitude in a language model through fine-tuning, embedding the preference in the model's weights so that it governs how the model resolves tradeoffs.
\smallskip

\section{Conclusion}\label{sec:conclusion}

The starting point of this paper is an equivalence. When a language model is required to state its choice with a single token, the rule by which it selects that token is exactly the conditional logit random utility model of \cite{mcfadden1972conditional}. Using this equivalence, we develop a structural methodology for recovering the economic preferences of these models directly from their generative outputs. The raw pre-softmax logits assigned to menu labels can therefore be interpreted as a systematic utility index and place the choices of a language model directly within the scope of standard structural discrete choice methods. It yields one estimator when the model's internal scores are observable, as with open-weight models, and the standard maximum likelihood estimator when only realized choices are observed, as with proprietary frontier models (i.e. ChatGPT). The same framework also motivates six revealed-preference diagnostics for completeness, reflexivity, monotonicity, transitivity, continuity, and independence of irrelevant alternatives. These diagnostics allow the stability of the utility interpretation to be evaluated before any structural parameter is estimated.

Applying this method to analyze the choices of several language models, our results reveal a clear distinction between coherence about economic content and invariance to economically irrelevant features. Across the models we study, monotonicity holds for every model, continuity holds wherever the premise-verified index is available, and transitivity is high. The models therefore respect mean-risk dominance, move smoothly along the interpolation paths in the experiment, and rarely generate cycles across separately elicited menus. In this sense, the models behave like textbook rational agents. The main failures arise from features of the elicitation environment that should be economically irrelevant. Reflexivity frequently fails because models assign different weights to identical portfolios upon label or position permutation. Independence of irrelevant alternatives proves to be the weakest diagnostic, as introducing a dominated option routinely alters the pairwise odds between the original alternatives. These failures covary, which suggests that positional sensitivity and menu sensitivity are closely related in single-token model choice.

Conditional on these limits, the structural estimates recover economically interpretable risk preferences. Every open-weight and frontier model in our sample is risk averse, with a positive and statistically significant risk-aversion parameter under the maintained quadratic specification. The estimates vary by nearly a factor of four across model families. This variation is large enough to imply different portfolio rankings and therefore different allocations when a model is asked to choose among risky alternatives. The estimated indifference curves are upward sloping and approximately convex in the mean-variance plane, so the familiar quadratic apparatus provides a useful first-order description of the preference surface these models reveal.

The fine-tuning exercise provides a structural validation of this interpretation. We install a target utility function in an open-source model at two known risk-aversion targets, and our estimator recovers both targets with high precision on menus the model never saw in training. The exercise also shows that the main invariance failures are not fixed properties of the base model. In the controlled environment studied here, the relevant preference object can therefore be measured and deliberately shifted, and the same intervention can improve the invariance properties required for its interpretation.

Two boundaries delimit what we claim. Each choice is a single token produced in one forward pass rather than the outcome of extended deliberation, and the domain is a controlled portfolio environment rather than the full space of economic decisions. These boundaries are what make the measurement exact, since within them the mapping from logits to utility follows from the softmax rule rather than from an auxiliary assumption. They also locate the object precisely. What we recover is the default preference a model brings before it reasons, which is the relevant object when a principal delegates under an incomplete instruction and the model answers directly. Whether that default persists when the model deliberates across many tokens, when the prompt is enriched toward a fuller mandate, or when the domain moves beyond portfolio choice is the natural next question, as is whether the same preference governs the longer chains of reasoning through which these models increasingly act. While we briefly entertain these extensions in our robustness checks, we treat the single-token measurement as our primary interest and the foundation those extensions build on. Whether structurally measured preferences persist under deliberation and across economic environments is the natural next question for this line of work.

These findings matter for the use of language models as economic agents. Delegating allocation decisions to a model also delegates authority to the preferences that govern its choices. Those preferences need not be observed by the principal, and the diagnostics in this paper show that they need not be invariant to irrelevant features of the menu. A coherent deployment strategy therefore requires more than evidence that a model can correctly rank dominated alternatives. It requires precise measurement of the preference governing choice. It requires diagnostics to verify the stability of that preference. Finally, it requires tools to shift the preference when a different objective is intended. The evidence presented here supplies a controlled demonstration of exactly that program.

\clearpage
\appendix

\section{Proofs}\label{app:proofs}

\subsection{Proof of Proposition \ref{prop:rum}}

\begin{proof}
  Assume the error terms $\epsilon_i$ are independent and identically distributed following a standard Gumbel (Type I Extreme Value) distribution. The cumulative distribution function (CDF) is $F(\epsilon) = \exp(-\exp(-\epsilon))$ and the probability density function (PDF) is $f(\epsilon) = \exp(-\epsilon)\exp(-\exp(-\epsilon))$. The model reveals a preference for token $i$ if it provides the highest utility among all alternatives in the vocabulary $\mathcal{V}$. That is, $U_i > U_j$ for all $j \neq i$. The probability of choosing token $i$ is given by
  $$P(i \mid x, \tau) = P(u_i(x) + \tau \epsilon_i > u_j(x) + \tau \epsilon_j \quad \forall j \neq i).$$

  Rearranging to isolate $\epsilon_j$ yields
  $$P(i \mid x, \tau) = P\left(\epsilon_j < \frac{u_i(x) - u_j(x)}{\tau} + \epsilon_i \quad \forall j \neq i\right).$$

  Conditional on a specific value of $\epsilon_i$, the probability that this inequality holds for all $j \neq i$ is the product of their individual CDFs, due to independence
  $$P(i \mid x, \tau, \epsilon_i) = \prod_{j \neq i} \exp\left(-\exp\left(-\left(\frac{u_i(x) - u_j(x)}{\tau} + \epsilon_i\right)\right)\right).$$

  To find the unconditional probability, we integrate this product over the marginal distribution of $\epsilon_i$ as follows
  $$P(i \mid x, \tau) = \int_{-\infty}^{\infty} f(\epsilon_i) \prod_{j \neq i} F\left(\frac{u_i(x) - u_j(x)}{\tau} + \epsilon_i\right) d\epsilon_i.$$

  Substitute the Gumbel PDF for $f(\epsilon_i)$ and separate the exponent terms inside the product
  $$P(i \mid x, \tau) = \int_{-\infty}^{\infty} \exp(-\epsilon_i) \exp(-\exp(-\epsilon_i)) \prod_{j \neq i} \exp\left(-\exp(-\epsilon_i) \exp\left(-\frac{u_i(x) - u_j(x)}{\tau}\right)\right) d\epsilon_i.$$

  We can combine the $\exp(-\exp(-\epsilon_i))$ terms. Notice that the standalone PDF component is exactly the term we would get if $j=i$ inside the product. Therefore, we can absorb it by indexing the product (and subsequent sum) over all $j \in \mathcal{V}$
  $$P(i \mid x, \tau) = \int_{-\infty}^{\infty} \exp(-\epsilon_i) \exp\left(-\exp(-\epsilon_i) \sum_{j \in \mathcal{V}} \exp\left(\frac{u_j(x) - u_i(x)}{\tau}\right)\right) d\epsilon_i.$$

  To solve this integral, apply the substitution $t = \exp(-\epsilon_i)$, which gives $dt = -\exp(-\epsilon_i) d\epsilon_i$. The limits of integration change from $(-\infty, \infty)$ to $(\infty, 0)$. To simplify the notation, let $K = \sum_{j \in \mathcal{V}} \exp\left(\frac{u_j(x) - u_i(x)}{\tau}\right)$. Then
  $$P(i \mid x, \tau) = \int_{\infty}^{0} - \exp(-K t) dt = \int_{0}^{\infty} \exp(-K t) dt.$$

  Evaluating this standard exponential integral yields
  $$P(i \mid x, \tau) = \left[ -\frac{1}{K} \exp(-K t) \right]_0^\infty = \frac{1}{K}.$$

  Substituting the expression for $K$ back into the result gives
  $$P(i \mid x, \tau) = \frac{1}{\sum_{j \in \mathcal{V}} \exp\left(\frac{u_j(x) - u_i(x)}{\tau}\right)} = \frac{\exp(u_i(x) / \tau)}{\sum_{j \in \mathcal{V}} \exp(u_j(x) / \tau)}.$$

  This exactly recovers Equation~\eqref{eq:softmax}, completing the proof.
\end{proof}

\section{Experimental Design and Identification}\label{app:experimental-design}

\subsection{Label Tokens}

The forced-choice protocol uses the labels A and B in binary menus and A, B, and C in three-option independence menus. For open-weight models, grouped label logits and probabilities are computed by aggregating every vocabulary token whose decoded text matches the target label after trimming whitespace, with alphabetic labels matched case-insensitively. This is the rule implemented by \texttt{build\_token\_map} in the local-model evaluation code and reused by the fine-tuning code. Table~\ref{tab:label-token-ids} reports the token IDs resolved from the cached tokenizers used in the run environment. The two fine-tuned Llama variants use the Llama~3.1~(8B) tokenizer and therefore share its label-token map.

\begin{table}[htbp]
  \centering
  \small
  \begin{tabularx}{\textwidth}{lXXX}
    \toprule
    Model tokenizer & A IDs & B IDs & C IDs \\
    \midrule
    Gemma-2-9B & 282, 314, 476, 586, 235250, 235280 & 283, 315, 518, 599, 235268, 235305 & 284, 316, 498, 585, 235260, 235288 \\
    Gemma-4-31B & 303, 335, 496, 562, 236746, 236776 & 304, 336, 518, 603, 236763, 236799 & 305, 337, 505, 565, 236755, 236780 \\
    Llama-3.1-8B and fine-tuned variants & 32, 64, 264, 362, 11575, 23845, 118586 & 33, 65, 293, 426, 2282, 13083, 108693, 115648 & 34, 66, 272, 356, 1470, 6391, 116545, 117155 \\
    Llama-3.1-70B & 32, 64, 264, 362, 11575, 23845, 118586 & 33, 65, 293, 426, 2282, 13083, 108693, 115648 & 34, 66, 272, 356, 1470, 6391, 116545, 117155 \\
    Ministral-3-8B & 1065, 1097, 1261, 1349, 44056, 88974 & 1066, 1098, 1289, 1398, 40796 & 1067, 1099, 1272, 1359, 6622, 94097 \\
    Qwen3-8B & 32, 64, 264, 362, 11323, 22985 & 33, 65, 293, 425, 2233, 12791 & 34, 66, 272, 356, 1444, 6258 \\
    Qwen3-14B & 32, 64, 264, 362, 11323, 22985 & 33, 65, 293, 425, 2233, 12791 & 34, 66, 272, 356, 1444, 6258 \\
    \bottomrule
  \end{tabularx}
  \caption{Open-weight label-token IDs aggregated into menu labels.}
  \label{tab:label-token-ids}
\end{table}

\subsection{Identification Under the Two Observation Regimes}

This appendix next states the binary identification results under the two observation regimes. Let $\mathcal{B}$ denote the finite grid of candidate bundles indexed by observed moments $(\mu,\sigma^2,R)$ with $R = \sigma^2 + \mu^2$. Let $\mathcal{P} \subset \mathcal{B} \times \mathcal{B}$ denote the set of admissible ordered binary menus $(A,B)$ with distinct bundles. For the structural estimation sample used in the paper, this algebraic pool is further restricted by excluding degenerate first-difference pairs and strict mean-risk dominance pairs, as described in Section~\ref{sec:exp-estimation}. For any menu $j \in \mathcal{P}$, write $x_j = (\Delta \mu_j, \Delta R_j)'$ with differences taken in A minus B order. Mirrored menus reverse the sign of $x_j$ and help balance the design, but they do not change its span.

\subsection{Observed-Logit Design}

For observed logits, within-menu differencing removes the menu fixed effect. Pairing each canonical presentation with its mirror removes the additive position component and yields the content signal
\[
  \Delta z_j
  =
  \tfrac{1}{2}
  \left(\Delta u_j-\Delta u_j^{\mathrm{mirror}}\right).
\]
The estimating equation is then
\[
  \Delta z_j = \theta_1 \Delta \mu_j + \theta_2 \Delta R_j + \tilde{\eta}_j,
\]
with no intercept. Define the second-moment matrix
\[
  M_N = \frac{1}{N}\sum_{j=1}^N x_j x_j'.
\]

If $M_N$ is nonsingular, then ordinary least squares on the paired content-signal equation identifies $(\theta_1,\theta_2)$. Under the maintained mapping $\theta_1 = \kappa$ and $\theta_2 = -\kappa\beta$, the same design identifies $\kappa = \theta_1$ and $\beta = -\theta_2/\theta_1$ whenever $\theta_1 \neq 0$.

For observed logits, precision depends on the same second-moment matrix. Under homoskedastic errors, the covariance matrix of $\hat{\theta}$ is proportional to $M_N^{-1}$. This covariance expression is a design benchmark. Reported inference uses heteroskedasticity-robust covariance estimates and the Delta method for the ratio parameter. A natural benchmark is the D-optimal criterion of \cite{louviere1983design},
\[
  \mathcal{J}^{D}
  \in
  \arg\max_{\mathcal{J}\subset\mathcal{P},\ |\mathcal{J}|=N}
  \det\left(\sum_{j\in\mathcal{J}} x_j x_j'\right).
\]
Since
\[
  \det(M_N)
  =
  \overline{\Delta \mu^2}\,\overline{\Delta R^2}
  -
  \overline{\Delta \mu \Delta R}^{\,2},
\]
the criterion favors menus with large spread in both coordinates and little correlation between them. Under a symmetric rectangular approximation to the feasible difference set, the benchmark places equal mass on the four corner points $(\pm 1, \pm 1)$. On the actual finite grid, the same logic calls for menus near the boundary of the feasible difference set, balanced representation across quadrants, and mirrored orderings so that
\[
  \sum_{j=1}^N \Delta \mu_j \approx 0,
  \qquad
  \sum_{j=1}^N \Delta R_j \approx 0,
  \qquad
  \sum_{j=1}^N \Delta \mu_j \Delta R_j \approx 0.
\]
Following \cite{louviere1983design}, a c-optimal refinement after a pilot estimate $\tilde{\beta}$ can target the precision of the ratio parameter by solving
\[
  \mathcal{J}^{\beta}
  \in
  \arg\min_{\mathcal{J}\subset\mathcal{P},\ |\mathcal{J}|=N}
  (\tilde{\beta},\,1)
  \left(\sum_{j\in\mathcal{J}} x_j x_j'\right)^{-1}
  (\tilde{\beta},\,1)'.
\]

\subsection{Sampled-Choice Design}

With sampled choices, the identified index is the binary logit index. The reported frontier estimator includes an intercept that absorbs an average A-label advantage,
\[
  p_j = \Lambda(\alpha + \gamma_1 \Delta \mu_j + \gamma_2 \Delta R_j).
\]
Throughout this design discussion temperature is held fixed and observed, so the information problem is governed by menu placement and query allocation rather than by rescaling the latent index.

Suppose the sampled-choice logit index is correctly specified and the design matrix with rows $w_j=(1,x_j')'$ has full rank. Then choice data identify the intercept and the reduced-form slopes $(\gamma_1,\gamma_2) = (\kappa/\tau,\,-\kappa\beta/\tau)$. If $\gamma_1 \neq 0$, the same data identify $\beta = -\gamma_2/\gamma_1$. If $\tau$ is known, then $\kappa = \tau\gamma_1$ is identified. If $\tau$ is not observed, then $\kappa$ and $\tau$ are not separately identified.

When private models are queried repeatedly on the same menu, the Bernoulli likelihood from Equation~\eqref{eq:log-likelihood} is equivalently written in grouped binomial form. If menu $j$ has $n_j$ valid parsed A/B responses and option A is chosen $y_j$ times, then
\[
  \mathcal{L}_{\mathrm{grp}}(\alpha,\gamma_1,\gamma_2)
  = \sum_{j=1}^J
  \left[
    y_j \ln p_j
    + (n_j-y_j)\ln(1-p_j)
  \right].
\]
This grouped likelihood uses the sufficient statistics for repeated draws and is numerically equivalent to the trial-level Bernoulli likelihood under conditional independence. Invalid or ambiguous parses are excluded from both $y_j$ and $n_j$ and enter the completeness diagnostic separately.

Because frontier choice probabilities must be formed from finitely many queries rather than read off a logit vector, the design specifies a replication count for each menu class, and these counts differ by provider and by experiment. The schedule is not summarized by a single repetition number, so Table~\ref{tab:frontier-schedule} reports it in full. Two regularities organize it. First, the structural estimation menus and the sampling-based diagnostics are replicated heavily, while the provider that exposes first-token logprobs (Google) requires only a single query for the diagnostics whose indices can be read directly from the recorded probabilities. The Anthropic and OpenAI APIs expose no logprobs, so their diagnostics are identified from repeated sampling alone. Second, the higher replication counts are kept affordable by capping the number of menus drawn from the larger diagnostic pools. These caps are stated in the table and reduce coverage relative to the open-weight counts in Table~\ref{tab:diagnostic-battery} rather than expanding it. All frontier runs fix $\tau = 1$ and top-$p = 1$.

\begin{table}[htbp]
  \centering
  \small
  \resizebox{\textwidth}{!}{
    \begin{tabular}{lcccc}
      \toprule
      Menu class & Anthropic (Claude 4.x) & OpenAI (GPT-4x) & Google (Gemini 2.5) & Menu cap \\
      \midrule
      Estimation & 20 & 20 & 20 (mirror 20) & 400 menus \\
      Placebo (reflexivity) & 20 & 20 & 1 (logprob) & 400 menus \\
      Monotonicity & 20 & 20 & 1 (logprob) & 400 menus \\
      Transitivity (triad legs) & 25 & 25 & 1 (logprob) & 400 triads \\
      Independence (trinary) & 50 & 50 & 1 (logprob) & 400 base menus \\
      Continuity (8 path points) & 25 & 25 & 1 (logprob) & 400 menus \\
      \bottomrule
    \end{tabular}
  }
  \caption{Frontier replication schedule (queries per menu) by provider and menu class.}
  \label{tab:frontier-schedule}
\end{table}

Design is now local because the expected Fisher information depends on the unknown choice probabilities. For binary logit with one draw from each of $N$ distinct menus, $p_j = \Lambda(\alpha + \gamma_1 \Delta \mu_j + \gamma_2 \Delta R_j)$, and $w_j=(1,x_j')'$, the expected Fisher information for $(\alpha,\gamma_1,\gamma_2)$ is
\[
  \frac{1}{N}\sum_{j=1}^N p_j(1-p_j)w_jw_j'.
\]
The corresponding slope-information block is
\[
  M_N^{\mathrm{RP}} = \frac{1}{N}\sum_{j=1}^N p_j(1-p_j)x_jx_j'.
\]
If menu $j$ is instead queried $n_j$ times, the expected information becomes
\[
  I(\mathbf{n})
  =
  \sum_{j=1}^J n_j p_j(1-p_j)w_jw_j'.
\]
The weight $p_j(1-p_j)$ is largest near indifference, so menus with extreme predicted choice probabilities contribute little information even when $x_j$ is large. Once pilot estimates are available, efficient sampled-choice design therefore operates on both margins of the problem. It chooses where to place menus and how to allocate repeated queries across those menus.

For a fixed total query budget $T = \sum_{j=1}^J n_j$, write $q_j = n_j/T$. Following \cite{louviere1983design} and \cite{louviere2010discrete}, a locally D-optimal second wave solves
\[
  q^{D}
  \in
  \arg\max_{\{q_j \ge 0,\ \sum_{j=1}^J q_j = 1\}}
  \det\left(
    \sum_{j=1}^J q_j \tilde{p}_j(1-\tilde{p}_j)w_jw_j'
  \right),
\]
where $\tilde{p}_j = \Lambda(\tilde{\alpha} + \tilde{\gamma}_1 \Delta \mu_j + \tilde{\gamma}_2 \Delta R_j)$ is evaluated at pilot estimates. Since the gradient of $\beta$ with respect to $(\alpha,\gamma_1,\gamma_2)$ is proportional to $(0,\beta,1)$, a locally c-optimal second wave for the ratio parameter solves
\begin{equation}\label{eq:sampled-c-opt-design}
  q^{\beta}
  \in
  \arg\min_{\{q_j \ge 0,\ \sum_{j=1}^J q_j = 1\}}
  (0,\,\tilde{\beta},\,1)
  \left(
    \sum_{j=1}^J q_j \tilde{p}_j(1-\tilde{p}_j)w_jw_j'
  \right)^{-1}
  (0,\,\tilde{\beta},\,1)'.
\end{equation}
These criteria imply a two-stage design. A first wave should be broad, balanced, and approximately orthogonal so the signs and relative magnitudes of the slopes can be learned. A second wave should concentrate on menus near the estimated indifference frontier $\alpha + \gamma_1 \Delta \mu_j + \gamma_2 \Delta R_j \approx 0$ while preserving independent movement in $\Delta \mu_j$ and $\Delta R_j$. It should also place more repetitions on those menus rather than spreading the query budget uniformly across the full first wave.

Under the common first-wave menu set $\mathcal{J}$ used in this paper, equal replication $n_j = m$ for every $j \in \mathcal{J}$ gives
\[
  I_m
  =
  m \sum_{j\in\mathcal{J}} p_j(1-p_j)w_jw_j'.
\]
Applying the Delta method to $\beta = -\gamma_2/\gamma_1$ then yields
\[
  \operatorname{Avar}(\hat{\beta}\mid \mathcal{J},m)
  \approx
  \frac{\tau^2}{m\kappa^2}
  (0,\,\beta,\,1)
  \left(
    \sum_{j\in\mathcal{J}} p_j(1-p_j)w_jw_j'
  \right)^{-1}
  (0,\,\beta,\,1)'.
\]
Consequently, a target standard error $s_{\beta}$ is met whenever $m$ satisfies
\begin{equation}\label{eq:min-reps}
  m
  \ge
  \frac{\tau^2}{\kappa^2 s_{\beta}^2}
  (0,\,\beta,\,1)
  \left(
    \sum_{j\in\mathcal{J}} p_j(1-p_j)w_jw_j'
  \right)^{-1}
  (0,\,\beta,\,1)'.
\end{equation}
Equation~\eqref{eq:min-reps} therefore provides a precision benchmark for equal replication. One can evaluate it over a conservative grid of plausible $(\beta,\kappa)$ values and take the largest implied integer. The main experiment in this paper uses equal replication across the common first-wave menus because it preserves comparability across observation regimes and model versions. This is transparent but not generally optimal for sampled-choice estimation. A model-specific second wave that reweights the first-wave menus according to Equation~\eqref{eq:sampled-c-opt-design}, or augments them with new menus near the pilot indifference frontier, is more statistically efficient when cross-model comparability is not the primary objective. Varying temperature across menus is not a substitute for this design because it only rescales the same index. As a robustness exercise, one can repeat the estimator at alternative fixed temperatures and verify that $\hat{\beta}$ remains stable while $\hat{\kappa}$ rescales according to $\hat{\kappa} = \tau \hat{\gamma}_1$.

\section{Model Fine-tuning Methodology}\label{app:fine-tuning}

This appendix specifies the construction of the two fine-tuned validation models reported in Section~\ref{subsec:fine-tuning}. It states the random-utility foundation of the training target, the full multi-term objective, the training corpus, the selection of the logit scale, the comparison with the earlier structural-gap objective, and the functional forms of the diagnostics used to evaluate the models.

For orientation, the fine-tuning exercise has two moving parts. First, the structural term teaches the model a logit surface, where for each portfolio menu, the A-minus-B label-logit gap should equal $\kappa$ times the utility difference implied by the target $\beta$. Second, the auxiliary invariance terms prevent degenerate ways of achieving that target, such as always favoring the first label, assigning mass outside the answer set, or changing the A/B odds when a dominated third option is added. The training target is therefore a relation between portfolio attributes and logit gaps, not a fixed preference for any answer token.

\subsection{Targeting Logits to a Parameterized Utility}\label{app:ft-rum}

The fine-tuned models are trained to embody the maintained utility $V_i = \mu_i - \beta R_i$ for a prescribed value of $\beta$. This training objective rests on a random-utility interpretation of the model's single-token choice. Let the latent value of option $i$ in a menu be $u_i = V_i + \varepsilon_i$, where $V_i = \mu_i - \beta R_i$ is the maintained benchmark utility and $\varepsilon_i$ is an idiosyncratic disturbance. When the disturbances are independent and identically Gumbel distributed, the probability that option $A$ is chosen over option $B$ is the binary logit
\begin{equation}\label{eq:rum-logit}
  P(A \succ B) = \frac{\exp(\frac{\kappa}{\tau} V_A)}{\exp(\frac{\kappa}{\tau} V_A) + \exp(\frac{\kappa}{\tau} V_B)},
\end{equation}
where $\kappa$ is the inverse scale of the Gumbel disturbance and $\tau$ is the model's temperature parameter. The softmax that the model places over the two label tokens is the empirical analog of this choice probability, so the pre-softmax logit gap between the two labels is the empirical analog of $\kappa(V_A - V_B)$. The structural-gap loss $\mathcal{L}_{\text{struct}}$ trains the model to match its position-corrected logit gap to this target over the jittered-grid estimation corpus, after removing the additive positional component identified on placebo menus.

The scale $\kappa$ enters Equation~\eqref{eq:rum-logit} only as a common multiplier of the utility difference and is not separately identified from the strength of preference within a single binary menu. The recovered risk parameter is unaffected by this indeterminacy. Writing the restricted quadratic surface fit to the position-corrected indifference data as $\Delta u = \theta_1 \Delta\mu + \theta_2 \Delta R$, the estimator of risk aversion is the ratio
\begin{equation}\label{eq:beta-ratio}
  \hat{\beta} = -\frac{\hat{\theta}_2}{\hat{\theta}_1}.
\end{equation}
Multiplying the entire logit surface by any positive constant $c$ rescales both $\hat{\theta}_1$ and $\hat{\theta}_2$ by $c$ and leaves their ratio unchanged. The choice of $\kappa$ therefore affects only the numerical conditioning of the training problem and the precision with which the indifference locus is identified, not the recovered value of $\beta$. This is the formal basis for the invariance asserted in Section~\ref{subsec:fine-tuning}. Under the original structural-gap objective the loss is computed with $\kappa$ set equal to the observed logit-gap standard deviation of the base model, a convention we denote \texttt{kappa\_anchor = true}. Section~\ref{app:ft-kappa} explains why this convention fails at low $\beta$ and how we replace it.

\subsection{The Multi-Term Invariance Objective}

All loss terms are evaluated at the cue position, immediately before the model emits its one-token response. Let $z_m\in\mathbb{R}^{|\mathcal V|}$ denote these logits for menu $m$. Since an answer label can have more than one tokenization, let $\mathcal T_q\subset\mathcal V$ be the set of token ids decoding to label $q\in\{A,B,C\}$ and define
\begin{equation}\label{eq:grouped-label-logits}
  \ell_q(z)=\log\sum_{v\in\mathcal T_q}\exp(z_v),
  \qquad
  \pi_v(z)=\frac{\exp(z_v)}{\sum_{w\in\mathcal V}\exp(z_w)}.
\end{equation}
For a binary menu $m=(A,B)$, write $g_m=\ell_A(z_m)-\ell_B(z_m)$. The training utility is
\begin{equation}\label{eq:training-utility}
  U_\beta(\mu,\sigma)=\mu-\beta(\mu^2+\sigma^2),
  \qquad
  \Delta U_m=U_\beta(\mu_A,\sigma_A)-U_\beta(\mu_B,\sigma_B).
\end{equation}
All expectations below are empirical averages over the indicated training set. For scalar residual $r$, let
\begin{equation}\label{eq:huber-loss}
  \rho_\delta(r)=
  \begin{cases}
    \frac{1}{2}r^2, & |r|\leq \delta,\\
    \delta(|r|-\frac{1}{2}\delta), & |r|>\delta,
  \end{cases}
\end{equation}
with $\delta=1$ throughout.

At optimizer step $t$, the full objective is
\begin{align}\label{eq:inv-loss}
  \mathcal L_t
  &=\mathcal L_{\mathrm{struct}}
  +\lambda_{\mathrm{mass}}\mathcal L_{\mathrm{mass}}
  +\lambda_{\mathrm{sob}}\mathcal L_{\mathrm{sob}} \nonumber\\
  &\quad
  +a_t\Big[
    \lambda_{\mathrm{ref}}\mathcal L_{\mathrm{ref}}
    +\lambda_{\mathrm{mir}}\mathcal L_{\mathrm{mir}}
    +\lambda_{\mathrm{iia}}\{\mathcal L_{\mathrm{iia}}
    +\lambda_{\mathrm{disp}}\mathcal L_{\mathrm{iia,disp}}\} \nonumber\\
    &\qquad\qquad
    +\lambda_{\mathrm{exact}}\{\mathcal L_{\mathrm{exact}}
    +\lambda_{\mathrm{disp}}\mathcal L_{\mathrm{exact,disp}}\}
    +\lambda_{3\mathrm{m}}\mathcal L_{3\mathrm{m}}
    +\lambda_{\mathrm{cyc}}\mathcal L_{\mathrm{cyc}}
    +\lambda_{\mathrm{path}}\mathcal L_{\mathrm{path}}
  \Big],
\end{align}
where $a_t=\min\{1,(t+1)/T_{\mathrm{aux}}\}$ and $T_{\mathrm{aux}}$ is ten percent of the planned optimizer steps. The structural, binary-mass, and Sobolev terms enter from the first step; the remaining terms are linearly introduced through $a_t$. In the reported full-objective runs,
\[
  (\lambda_{\mathrm{mass}},\lambda_{\mathrm{sob}},\lambda_{\mathrm{ref}},
    \lambda_{\mathrm{mir}},\lambda_{\mathrm{iia}},\lambda_{\mathrm{disp}},
    \lambda_{\mathrm{exact}},\lambda_{3\mathrm{m}},\lambda_{\mathrm{cyc}},
  \lambda_{\mathrm{path}},\lambda_{\mathrm{path,smooth}})
\]
\[
  =(0.10,0.10,0.10,0.20,0.05,0.05,0.15,0.05,0.10,0.15,0.03).
\]
Terms with zero weight, or without corresponding training examples in a run, are absent from that run.

The structural term matches the binary logit gap to the random-utility target:
\begin{equation}\label{eq:loss-struct}
  \mathcal L_{\mathrm{struct}}
  =
  \mathbb E_{m\in\mathcal D_2}
  \rho_\delta\!\left(g_m-b_\alpha-\kappa\Delta U_m\right).
\end{equation}
Here $\mathcal D_2$ denotes the binary structural training rows. The invariance runs set $b_\alpha=0$; the earlier structural-gap runs set it equal to the pre-training estimate of the positional offset.

The binary mass term is
\begin{equation}\label{eq:loss-mass}
  \mathcal L_{\mathrm{mass}}
  =
  \mathbb E_{m\in\mathcal D_2}
  \left[-\log\sum_{v\in\mathcal T_A\cup\mathcal T_B}\pi_v(z_m)\right],
\end{equation}
which penalizes probability assigned outside the admissible answer labels.

The Sobolev term constrains local derivatives of the logit surface. For derivative rows $m\in\mathcal D_\partial$, let $z_{m,\mu+}$ and $z_{m,\mu-}$ be the cue logits after perturbing $\mu_A$ by $\pm h_\mu$, and define $z_{m,\sigma+}$ and $z_{m,\sigma-}$ analogously. Set
\[
  D_\mu g_m=\frac{g(z_{m,\mu+})-g(z_{m,\mu-})}{2h_\mu},
  \qquad
  D_\sigma g_m=\frac{g(z_{m,\sigma+})-g(z_{m,\sigma-})}{2h_\sigma},
\]
where $g(z)=\ell_A(z)-\ell_B(z)$. Since
\[
  \frac{\partial\Delta U_m}{\partial\mu_A}=1-2\beta\mu_A,
  \qquad
  \frac{\partial\Delta U_m}{\partial\sigma_A}=-2\beta\sigma_A,
\]
the implemented loss is
\begin{equation}\label{eq:loss-sobolev}
  \mathcal L_{\mathrm{sob}}
  =
  \mathbb E_{m\in\mathcal D_\partial}
  \frac{1}{2}\left[
    \rho_\delta\!\left(D_\mu g_m-\kappa(1-2\beta\mu_A)\right)
    +\rho_\delta\!\left(D_\sigma g_m-\kappa(-2\beta\sigma_A)\right)
  \right].
\end{equation}

The reflexivity and mirror terms impose label-position invariance. For placebo menus $\mathcal D_0$ in which both labels describe the same portfolio,
\begin{equation}\label{eq:loss-reflexive}
  \mathcal L_{\mathrm{ref}}
  =
  \mathbb E_{m\in\mathcal D_0}\rho_\delta(g_m).
\end{equation}
For each binary menu $m$ and its label-swapped mirror $\tilde m$,
\begin{equation}\label{eq:loss-mirror}
  \mathcal L_{\mathrm{mir}}
  =
  \mathbb E_{(m,\tilde m)}\rho_\delta(g_m+g_{\tilde m}).
\end{equation}

The dominated-alternative terms are defined on trinary examples $e=(A,B,C)$, where $C$ is strictly dominated by both original portfolios. Let $\Pi$ denote the six permutations of the three portfolios across the answer slots. If $s_A(\pi)$ and $s_B(\pi)$ are the slots containing the original portfolios $A$ and $B$ under permutation $\pi$, define
\[
  g_{e\pi}^{(3)}
  =
  \ell_{s_A(\pi)}(z_{e\pi})-\ell_{s_B(\pi)}(z_{e\pi}),
  \qquad
  \bar g_e^{(3)}=\frac{1}{|\Pi|}\sum_{\pi\in\Pi}g_{e\pi}^{(3)}.
\]
The value and dispersion components are
\begin{align}
  \mathcal L_{\mathrm{iia}}
  &=
  \mathbb E_e
  \rho_\delta\!\left(\bar g_e^{(3)}-\kappa\Delta U_{AB,e}\right),
  \label{eq:loss-iia}\\
  \mathcal L_{\mathrm{iia,disp}}
  &=
  \mathbb E_e\frac{1}{|\Pi|}\sum_{\pi\in\Pi}
  \rho_\delta\!\left(g_{e\pi}^{(3)}-\kappa\Delta U_{AB,e}\right).
  \label{eq:loss-iia-disp}
\end{align}
Thus the average trinary $A/B$ gap is tied to the structural target, while the dispersion term rules out offsetting slot-specific deviations.

The exact IIA term instead anchors the trinary odds to the model's own binary odds for the same pair. Let $g_e^{(2)}$ denote the binary gap and let $\mathrm{sg}(\cdot)$ denote stop-gradient evaluation. Then
\begin{align}
  \mathcal L_{\mathrm{exact}}
  &=
  \mathbb E_e
  \rho_\delta\!\left(\bar g_e^{(3)}-\mathrm{sg}(g_e^{(2)})\right),
  \label{eq:loss-exact-iia}\\
  \mathcal L_{\mathrm{exact,disp}}
  &=
  \mathbb E_e\frac{1}{|\Pi|}\sum_{\pi\in\Pi}
  \rho_\delta\!\left(g_{e\pi}^{(3)}-\mathrm{sg}(\bar g_e^{(3)})\right).
  \label{eq:loss-exact-iia-disp}
\end{align}
The associated trinary mass term is
\begin{equation}\label{eq:loss-trinary-mass}
  \mathcal L_{3\mathrm{m}}
  =
  \mathbb E_{e,\pi}
  \left[-\log\sum_{v\in\mathcal T_A\cup\mathcal T_B\cup\mathcal T_C}
  \pi_v(z_{e\pi})\right].
\end{equation}

The cycle term is defined on binary triads $(A,B)$, $(B,C)$, and $(A,C)$. Let $g_{e,AB}$, $g_{e,BC}$, and $g_{e,AC}$ be the corresponding logit gaps, and let $s_{e,r}\in\{-1,0,1\}$ be the target sign for comparison $r\in\{AB,BC,AC\}$. With $x_+=\max\{x,0\}$ and $\mathcal J_e=\{r:s_{e,r}\neq0\}$,
\begin{equation}\label{eq:loss-cycle}
  \mathcal L_{\mathrm{cyc}}
  =
  \mathbb E_e\left[
    \rho_\delta(g_{e,AB}+g_{e,BC}-g_{e,AC})
    +\omega_{\mathrm{sgn}}\frac{1}{|\mathcal J_e|}
    \sum_{r\in\mathcal J_e}(m_{\mathrm{sgn}}-s_{e,r}g_{e,r})_+
  \right],
\end{equation}
where $\omega_{\mathrm{sgn}}=0.25$ and $m_{\mathrm{sgn}}=0.25$.

Finally, the path term is defined on interpolation paths $p$ with ordered gaps $g_{p1},\ldots,g_{pK}$:
\begin{align}
  \mathcal L_{\mathrm{path}}
  &=
  \mathbb E_p\left[
    (m_{\mathrm{br}}+\min_k g_{pk})_+
    +(m_{\mathrm{br}}-\max_k g_{pk})_+ \right. \nonumber\\
    &\qquad\qquad\left.
    +\lambda_{\mathrm{path,smooth}}
    \frac{1}{K-2}\sum_{k=2}^{K-1}
    (g_{p,k+1}-2g_{pk}+g_{p,k-1})^2
  \right],
  \label{eq:loss-path}
\end{align}
with $m_{\mathrm{br}}=0.05$. The first two terms require the path to bracket the indifference surface; the last term penalizes discrete curvature along the path.

\subsection{Training Corpus}

For the high-$\beta$ run the corpus follows the original structural-gap construction. It contains 2,048 jittered-grid estimation menus, 2,048 dominance menus, 2,048 iso-utility menus, 600 derivative triplets for the Sobolev term, 1,024 IIA dominated menus, and 30 edge menus at each of $\Delta\mu \approx 0$ and $\Delta R \approx 0$. The grid spans $\mu \in [2, 26]$ in steps of $0.5$ and $\sigma \in [10, 80]$ in steps of $2.5$, and each menu is paired with its positional mirror.

For the low-$\beta$ runs the corpus is enlarged in two respects. The iso-utility count is increased from 2,048 to 4,096 to provide denser coverage of the nearly flat indifference contours that are harder to pin down at very low $\beta$. The edge coverage is expanded from 30 to 400 menus per edge type to provide more curvature-isolating examples at the boundary of the identification region. Training runs for five epochs rather than the two used for the high-$\beta$ case, allowing the auxiliary losses to propagate through the full parameter trajectory.

\subsection{Logit Scale and the Low-\texorpdfstring{$\beta$}{beta} Kappa Problem}\label{app:ft-kappa}

The logit scale $\kappa$ is the central design choice for the low-$\beta$ training. As Section~\ref{app:ft-rum} noted, the structural-gap objective by default anchors $\kappa$ to the observed logit-gap standard deviation of the base model. At $\beta = 0.0045$ the base model's logit scale is moderate and anchoring works well. At $\beta = 0.0001$ the base model's gap standard deviation is small enough that the structured component of the logit is numerically negligible in bf16 arithmetic. Anchoring $\kappa$ to this small value forces the training signal into the noise floor and leaves positional structure to dominate, which is the failure mode that the reflexivity terms are meant to cure. The remedy is to set $\kappa$ to a fixed positive constant that places the logit gaps at a numerically stable scale. Because $\hat{\beta}$ is invariant to $\kappa$ by the argument of Equation~\eqref{eq:beta-ratio}, this choice does not bias the structural estimate, and it affects only numerical stability and the shape of the learned preference surface. To select the constant we trained three otherwise identical models at $\beta = 0.0001$ with $\kappa \in \{1, 2, 4\}$. Table~\ref{tab:kappa-sweep} reports the resulting diagnostics.

\begin{table}[htbp]
  \centering
  \small
  \begin{tabular}{lccccccc}
    \toprule
    $\kappa$ target & $\mathcal{C}$ & $\mathcal{R}$ &
    $\mathcal{K}_G$ & $\mathcal{M}$ & $\mathcal{T}$ & $\mathcal{I}$ &
    $\hat\kappa$ \\
    \midrule
    1.0 & 0.9961 & 0.9941 & 0.9333 & 1.0000 & 1.0000 & 0.9484 & 1.001 \\
    2.0 & 0.9942 & 0.9945 & 0.7398 & 1.0000 & 1.0000 & 0.8569 & 2.003 \\
    4.0 & 0.9904 & 0.9959 & 0.7825 & 1.0000 & 1.0000 & 0.4960 & 3.991 \\
    \bottomrule
  \end{tabular}
  \caption{Diagnostic indices across the $\kappa$ sweep at $\beta = 0.0001$.}
  \label{tab:kappa-sweep}
\end{table}

IIA deteriorates sharply as $\kappa$ increases, with the mean index falling from $0.948$ at $\kappa = 1$ to $0.497$ at $\kappa = 4$. A larger $\kappa$ magnifies the utility gap between options and makes the model more sensitive to the precise content of each option, so that appending a dominated third option induces a re-evaluation of the relative standing of $A$ and $B$ that violates IIA. Continuity is also highest at $\kappa = 1$. The $\kappa = 1$ run additionally delivers the cleanest structural recovery, returning $\hat\kappa = 1.001$ and $\hat\beta = 8.68 \times 10^{-5} \approx 0.0001$. We therefore select $\kappa = 1$ for the low-$\beta$ model.

\subsection{Selected Models and Comparison with the Structural-Gap Objective}

Two models are carried into Section~\ref{subsec:fine-tuning}. The high-$\beta$ model is trained on the high-$\beta$ corpus with anchoring retained, a nominal $\kappa = 0.20$, and two epochs in bf16, and it converged in 0.89 hours. The low-$\beta$ model is trained on the enlarged low-$\beta$ corpus with $\kappa = 1.0$ and anchoring disabled over five epochs in bf16, and it converged in 2.49 hours with a recovered in-batch scale of $1.000$. Table~\ref{tab:inv-results} compares the two selected models with the earlier structural-gap models on the full set of diagnostics and structural estimates.

\begin{table}[htbp]
  \centering
  \small
  \setlength{\tabcolsep}{4pt}
  \begin{tabular}{lccccccccc}
    \toprule
    Model & $\mathcal{C}$ & $\mathcal{R}$ & $\mathcal{K}_G$ &
    $\mathcal{M}$ & $\mathcal{T}$ & $\mathcal{I}$ &
    $\hat\kappa$ & $\hat\beta$ & $R^2$ \\
    \midrule
    \multicolumn{10}{l}{\textit{Structural-gap objective}} \\
    \midrule
    $\beta=0.0045$ &
    0.9977 & 0.9966 & 1.0000 & 1.0000 & 0.9994 & 0.5870 &
    0.6597 & 0.0045 & 0.9979 \\
    $\beta=0.0001$ &
    0.9971 & 0.6057 & 0.9667 & 1.0000 & 1.0000 & 0.3554 &
    0.7333 & 0.0001 & 0.9997 \\
    \midrule
    \multicolumn{10}{l}{\textit{Invariance objective}} \\
    \midrule
    $\beta=0.0045$ &
    0.9914 & 0.9870 & 1.0000 & 1.0000 & 0.9583 & 0.8000 &
    0.7189 & 0.0046 & 0.9935 \\
    $\beta=0.0001$, $\kappa=1$ &
    0.9961 & 0.9941 & 0.9333 & 1.0000 & 1.0000 & 0.9484 &
    1.0013 & 0.0001 & 0.9992 \\
    \bottomrule
  \end{tabular}
  \caption{Diagnostic indices and structural estimates for the structural-gap and invariance objectives at each $\beta$ target.}
  \label{tab:inv-results}
\end{table}

For the high-$\beta$ model the principal gain is in IIA, where the index rises from $0.587$ to $0.800$, while reflexivity remains near unity and monotonicity is maintained at one. Transitivity falls modestly from $0.999$ to $0.958$, so that five of the one hundred and twenty directional triads now cycle, and continuity remains at one. The recovered parameters are close to the training target, with $\hat\beta = 0.004560$ against a target of $0.0045$ and $R^2 = 0.9935$. For the low-$\beta$ model the improvement is larger. Reflexivity rises from $0.606$ to $0.994$, with the first-option share falling from $0.573$ to $0.499$ and the intercept-only position bias $\hat\alpha = -0.004$ (standard error $0.002$) statistically indistinguishable from zero. IIA rises from $0.355$ to $0.948$, the strongest value in the subject pool, and monotonicity is maintained at one. Continuity falls slightly, from $0.967$ to $0.933$, reflecting two interpolation paths whose log-odds are monotone but fail to cross zero.\footnote{A run in fp32 arithmetic was initiated to test whether this continuity reduction is an artifact of bf16 quantization. The run was not carried to completion, so whether fp32 arithmetic would recover the two non-bracketing paths remains an open question.} The structural recovery is essentially exact, with $\hat\kappa = 1.001$, $\hat\beta = 8.68 \times 10^{-5}$, and $R^2 = 0.9992$.

\subsection{Reproducibility}

The resolved training configurations, the LoRA adapters \citep{hu2022lora} and merged weights for each selected model, and the per-axiom evaluation outputs are provided in the replication package that accompanies the paper. The high-$\beta$ and low-$\beta$ runs share the loss architecture of Equation~\eqref{eq:inv-loss} and differ only in the corpus enlargements, the epoch count, and the logit-scale convention documented above.

\section{Robustness Checks}\label{app:robustness}

\subsection{Prompt Variations}

To ensure our results are not merely artifacts of the specific phrasing used in our main experimental prompt, we conducted robustness checks using two alternative prompt structures. The underlying design grid of choice menus remains completely unchanged; only the textual presentation of the problem differs.

Below, we document the three prompts tested. The \textbf{Baseline Prompt} is the phrasing used throughout the main body of the paper. The \textbf{Compact Table Prompt} removes most conversational framing and presents the options in a dense, structured format. The \textbf{Sentence Neutral Prompt} rephrases the task instruction to use neutral synonyms (e.g., "comparing" instead of "options", "select" instead of "choose").

\subsubsection*{Baseline Prompt}
\begin{tcolorbox}[colback=gray!5,colframe=gray!50,arc=4mm,boxrule=1pt]
  \ttfamily
  You have two options for your annual financial portfolio.\\
  Choose the portfolio you prefer and reply with A or B only.

  Option A offers an expected return of \detokenize{{mu_A}}\% and a standard deviation\\
  of \detokenize{{sigma_A}}\%.

  Option B offers an expected return of \detokenize{{mu_B}}\% and a standard deviation\\
  of \detokenize{{sigma_B}}\%.
\end{tcolorbox}

\subsubsection*{Compact Table Prompt}
\begin{tcolorbox}[colback=gray!5,colframe=gray!50,arc=4mm,boxrule=1pt]
  \ttfamily
  Choose one annual financial portfolio. Reply with only A or B.

  A: expected return = \detokenize{{mu_A}}\%; standard deviation = \detokenize{{sigma_A}}\%.\\
  B: expected return = \detokenize{{mu_B}}\%; standard deviation = \detokenize{{sigma_B}}\%.
\end{tcolorbox}

\subsubsection*{Sentence Neutral Prompt}
\begin{tcolorbox}[colback=gray!5,colframe=gray!50,arc=4mm,boxrule=1pt]
  \ttfamily
  You are comparing two annual financial portfolios.\\
  Select the portfolio you prefer. Your answer must be exactly A or B.

  Portfolio A has an expected return of \detokenize{{mu_A}}\% and a standard deviation of \detokenize{{sigma_A}}\%.\\
  Portfolio B has an expected return of \detokenize{{mu_B}}\% and a standard deviation of \detokenize{{sigma_B}}\%.
\end{tcolorbox}

Table~\ref{tab:rationality-diagnostics-robustness} and Table~\ref{tab:structural-estimates-robustness} present the consolidated rationality diagnostics and structural parameter estimates, respectively, across these three prompt variations for all open-weight models. Table~\ref{tab:menu-design-robustness} reports the corresponding structural robustness check for random menu creation from the admissible portfolio-pair grid.

\begin{table}[htbp]
  \centering
  \small
  \resizebox{\textwidth}{!}{
    \begin{tabular}{llcccccc}
      \toprule
      Model & Prompt & Complete & Reflexive & Continuity & Monotonicity & Transitive & IIA \\
      \midrule
      \multirow{3}{*}{Qwen 3 (14B)} & Baseline & 1.0000 & 0.0009 & 1.0000 & 1.0000 & 0.9831 & 0.3215 \\
      & Compact Table & 1.0000 & 0.0020 & 1.0000 & 1.0000 & 0.9655 & 0.4489 \\
      & Sentence Neutral & 1.0000 & 0.0001 & 1.0000 & 1.0000 & 1.0000 & 0.4804 \\
      \midrule
      \multirow{3}{*}{Qwen 3 (8B)} & Baseline & 1.0000 & 0.0144 & 1.0000 & 1.0000 & 0.9565 & 0.6437 \\
      & Compact Table & 1.0000 & 0.0286 & 1.0000 & 1.0000 & 0.9402 & 0.5764 \\
      & Sentence Neutral & 1.0000 & 0.1245 & 1.0000 & 1.0000 & 0.9008 & 0.6218 \\
      \midrule
      \multirow{3}{*}{Gemma 4 (31B Instruct)} & Baseline & 1.0000 & 0.0000 & 1.0000 & 1.0000 & 0.9680 & 0.1222 \\
      & Compact Table & 1.0000 & 0.0000 & 1.0000 & 1.0000 & 0.9835 & 0.2813 \\
      & Sentence Neutral & 1.0000 & 0.0000 & 1.0000 & 1.0000 & 0.9920 & 0.2599 \\
      \midrule
      \multirow{3}{*}{Ministral 3 (8B Instruct)} & Baseline & 0.9995 & 0.0876 & 1.0000 & 1.0000 & 0.9937 & 0.8242 \\
      & Compact Table & 0.9982 & 0.7246 & 1.0000 & 1.0000 & 0.8926 & 0.8601 \\
      & Sentence Neutral & 0.9892 & 0.2944 & 1.0000 & 1.0000 & 0.9426 & 0.8417 \\
      \midrule
      \multirow{3}{*}{Gemma 2 (9B Instruct)} & Baseline & 0.9960 & 0.7974 & 1.0000 & 1.0000 & 0.9855 & 0.3272 \\
      & Compact Table & 0.9982 & 0.4866 & 1.0000 & 1.0000 & 0.9508 & 0.3099 \\
      & Sentence Neutral & 0.9982 & 0.1730 & 1.0000 & 1.0000 & 0.8899 & 0.3485 \\
      \midrule
      \multirow{3}{*}{Llama 3.1 (8B Instruct)} & Baseline & 0.9021 & 0.8364 & 1.0000 & 1.0000 & 0.9938 & 0.9195 \\
      & Compact Table & 0.9640 & 0.8087 & 0.9750 & 1.0000 & 1.0000 & 0.8990 \\
      & Sentence Neutral & 0.9245 & 0.4003 & 1.0000 & 1.0000 & 0.9793 & 0.8409 \\
      \midrule
      \multirow{3}{*}{Llama 3.1 (70B Instruct)} & Baseline & 0.7580 & 0.2689 & 1.0000 & 1.0000 & 1.0000 & 0.7627 \\
      & Compact Table & 0.7628 & 0.3252 & 1.0000 & 1.0000 & 0.9626 & 0.8339 \\
      & Sentence Neutral & 0.6699 & 0.1124 & 1.0000 & 1.0000 & 0.9826 & 0.7990 \\
      \bottomrule
    \end{tabular}
  }
  \caption{Rationality diagnostics across prompt variations}
  \label{tab:rationality-diagnostics-robustness}
\end{table}

\begin{table}[htbp]
  \centering
  \small
  \begin{tabular}{llccc}
    \toprule
    Model & Prompt & $\hat{\kappa}$ & $\hat{\beta}$ & $R^2$ \\
    \midrule
    \multirow{6}{*}{Qwen 3 (14B)} & Baseline & +0.631835 & +0.003302 & 0.9360 \\
    & & (0.006339) & (0.0000) & \\
    & Compact Table & +0.506505 & +0.003936 & 0.9187 \\
    & & (0.004328) & (0.0001) & \\
    & Sentence Neutral & +0.597744 & +0.002097 & 0.9519 \\
    & & (0.005575) & (0.0000) & \\
    \midrule
    \multirow{6}{*}{Qwen 3 (8B)} & Baseline & +0.310677 & +0.003057 & 0.8887 \\
    & & (0.004192) & (0.0001) & \\
    & Compact Table & +0.325675 & +0.003343 & 0.8906 \\
    & & (0.003444) & (0.0001) & \\
    & Sentence Neutral & +0.186353 & +0.003589 & 0.9189 \\
    & & (0.002105) & (0.0000) & \\
    \midrule
    \multirow{6}{*}{Gemma 4 (31B Instruct)} & Baseline & +1.482339 & +0.004464 & 0.9771 \\
    & & (0.007633) & (0.0000) & \\
    & Compact Table & +1.160998 & +0.004422 & 0.9487 \\
    & & (0.008343) & (0.0001) & \\
    & Sentence Neutral & +1.467195 & +0.004985 & 0.9760 \\
    & & (0.008035) & (0.0000) & \\
    \midrule
    \multirow{6}{*}{Ministral 3 (8B Instruct)} & Baseline & +0.097941 & +0.011363 & 0.8968 \\
    & & (0.003656) & (0.0003) & \\
    & Compact Table & +0.115390 & +0.004348 & 0.9426 \\
    & & (0.001158) & (0.0000) & \\
    & Sentence Neutral & +0.110512 & +0.004349 & 0.8966 \\
    & & (0.001571) & (0.0001) & \\
    \midrule
    \multirow{6}{*}{Gemma 2 (9B Instruct)} & Baseline & +0.249116 & +0.007871 & 0.9114 \\
    & & (0.005723) & (0.0001) & \\
    & Compact Table & +0.235953 & +0.006484 & 0.9139 \\
    & & (0.004397) & (0.0001) & \\
    & Sentence Neutral & +0.330350 & +0.005079 & 0.8809 \\
    & & (0.004534) & (0.0001) & \\
    \midrule
    \multirow{6}{*}{Llama 3.1 (8B Instruct)} & Baseline & +0.038552 & +0.009121 & 0.8861 \\
    & & (0.002197) & (0.0004) & \\
    & Compact Table & +0.020381 & +0.000152 & 0.8859 \\
    & & (0.000359) & (0.0001) & \\
    & Sentence Neutral & +0.029573 & +0.005770 & 0.7530 \\
    & & (0.000969) & (0.0001) & \\
    \midrule
    \multirow{6}{*}{Llama 3.1 (70B Instruct)} & Baseline & +0.161817 & +0.004846 & 0.9386 \\
    & & (0.001882) & (0.0000) & \\
    & Compact Table & +0.148666 & +0.004984 & 0.9086 \\
    & & (0.001945) & (0.0001) & \\
    & Sentence Neutral & +0.194166 & +0.003228 & 0.9049 \\
    & & (0.002646) & (0.0000) & \\
    \bottomrule
  \end{tabular}
  \caption{Structural estimates across prompt variations}
  \label{tab:structural-estimates-robustness}
\end{table}

\subsection{Menu Design Variations}

Table~\ref{tab:menu-design-robustness} compares the baseline D-optimal menu sample with a random draw from the same admissible non-dominance grid of portfolio pairs. Estimates use the same paired restricted-quadratic estimator as Table~\ref{tab:structural-estimates-robustness}; standard errors are in parentheses. The $N$ column reports retained canonical mirror pairs after parsing and completeness filters.

\begin{table}[htbp]
  \centering
  \small
  \begin{tabular}{llccccc}
    \toprule
    Model & Menu sample & $N$ & $\hat{\alpha}$ & $\hat{\kappa}$ & $\hat{\beta}$ & $R^2$ \\
    \midrule
    \multirow{4}{*}{Qwen 3 (14B)} & D-optimal & 680 & +1.383456 & +0.631835 & +0.003302 & 0.9360 \\
    & & & (0.034971) & (0.006339) & (0.0000) & \\
    & Random grid & 680 & +1.627941 & +0.873192 & +0.002982 & 0.6296 \\
    & & & (0.037589) & (0.022289) & (0.0001) & \\
    \midrule
    \multirow{4}{*}{Qwen 3 (8B)} & D-optimal & 680 & +0.304689 & +0.310677 & +0.003057 & 0.8887 \\
    & & & (0.042850) & (0.004192) & (0.0001) & \\
    & Random grid & 680 & -0.968932 & +0.456527 & +0.003079 & 0.6330 \\
    & & & (0.035984) & (0.011659) & (0.0001) & \\
    \midrule
    \multirow{4}{*}{Gemma 4 (31B Instruct)} & D-optimal & 680 & -0.572503 & +1.482339 & +0.004464 & 0.9771 \\
    & & & (0.037519) & (0.007633) & (0.0000) & \\
    & Random grid & 680 & -0.731503 & +2.026783 & +0.004705 & 0.6905 \\
    & & & (0.059740) & (0.042880) & (0.0001) & \\
    \midrule
    \multirow{4}{*}{Ministral 3 (8B Instruct)} & D-optimal & 680 & +0.383919 & +0.097941 & +0.011363 & 0.8968 \\
    & & & (0.027617) & (0.003656) & (0.0003) & \\
    & Random grid & 680 & +0.360482 & +0.128997 & +0.010003 & 0.4799 \\
    & & & (0.020632) & (0.008756) & (0.0005) & \\
    \midrule
    \multirow{4}{*}{Gemma 2 (9B Instruct)} & D-optimal & 680 & +0.716744 & +0.249116 & +0.007871 & 0.9114 \\
    & & & (0.025175) & (0.005723) & (0.0001) & \\
    & Random grid & 680 & +0.809136 & +0.335482 & +0.007995 & 0.6457 \\
    & & & (0.034776) & (0.012711) & (0.0002) & \\
    \midrule
    \multirow{4}{*}{Llama 3.1 (8B Instruct)} & D-optimal & 200 & +0.172486 & +0.038552 & +0.009121 & 0.8861 \\
    & & & (0.014302) & (0.002197) & (0.0004) & \\
    & Random grid & 680 & +0.118513 & +0.041180 & +0.009602 & 0.7067 \\
    & & & (0.006473) & (0.002129) & (0.0003) & \\
    \midrule
    \multirow{4}{*}{Llama 3.1 (70B Instruct)} & D-optimal & 680 & +0.590124 & +0.161817 & +0.004846 & 0.9386 \\
    & & & (0.010213) & (0.001882) & (0.0000) & \\
    & Random grid & 661 & +0.564216 & +0.201809 & +0.005139 & 0.6097 \\
    & & & (0.010698) & (0.005752) & (0.0001) & \\
    \bottomrule
  \end{tabular}
  \caption{Structural estimates under D-optimal and random admissible-grid menu samples}
  \label{tab:menu-design-robustness}
\end{table}

Table~\ref{tab:alpha-stability-random-seeds} separates two sources of variation in the position-bias estimate.  The D-optimal design fixes a single space-filling estimation sample, while the random-grid design redraws the same number of admissible portfolio pairs from the feasible mean--variance grid.  Holding the model, prompt, estimator, and grid fixed, variation across random seeds therefore measures how much the estimated position-bias term depends on which feasible menus happen to be sampled rather than on the model's logits alone.

The first three random-design columns report the mean, standard deviation, and range of $\hat{\alpha}$ across seeds 42--51.  These are design-sensitivity summaries, not sampling standard errors: the model responses are deterministic for a fixed prompt and model checkpoint, and the only object being varied is the menu design.  The final two columns summarize whether the position-bias term is approximately constant within the explored grid.  ``Tertile spread'' is the average difference between the largest and smallest tertile means of pair-level bias after sorting menus by predicted choice difficulty; ``Hardness $R^2$'' is the average explanatory power of the linear hardness diagnostic that regresses pair-level bias on predicted difficulty and grid location.

The seed-to-seed dispersion is small within each model, indicating that the random admissible-grid estimate is stable once the design rule is fixed.  Stability across random draws does not imply agreement with the D-optimal design, however.  Qwen 3 (8B) is the clearest exception: its random-grid estimate is tightly concentrated around a negative value even though the D-optimal estimate is positive.  Qwen 3 (14B) also shifts upward under random designs.  By contrast, Ministral, both Llama models, and Gemma 2 produce random-grid means close to their D-optimal estimates.  The tertile-spread and hardness diagnostics show that several models still have state-dependent position bias inside the feasible grid, so the headline $\hat{\alpha}$ should be read as a design-weighted average position effect rather than as evidence that bias is constant over all menu difficulties.

\begin{table}[htbp]
  \centering
  \small
  \resizebox{\textwidth}{!}{
    \begin{tabular}{lrrrrrr}
      \toprule
      Model & D-opt $\hat{\alpha}$ & Random mean $\hat{\alpha}$ & Random SD & Random range & Tertile spread & Hardness $R^2$ \\
      \midrule
      Qwen 3 (14B) & 1.383 & 1.600 & 0.025 & [1.562, 1.637] & 0.610 & 0.305 \\
      Qwen 3 (8B) & 0.305 & -0.987 & 0.035 & [-1.055, -0.922] & 0.367 & 0.274 \\
      Gemma 4 (31B Instruct) & -0.573 & -0.717 & 0.047 & [-0.772, -0.613] & 0.848 & 0.185 \\
      Ministral 3 (8B Instruct) & 0.384 & 0.363 & 0.025 & [0.317, 0.412] & 0.425 & 0.255 \\
      Gemma 2 (9B Instruct) & 0.717 & 0.789 & 0.040 & [0.734, 0.878] & 0.902 & 0.163 \\
      Llama 3.1 (8B Instruct) & 0.172 & 0.124 & 0.005 & [0.119, 0.131] & 0.097 & 0.287 \\
      Llama 3.1 (70B Instruct) & 0.590 & 0.559 & 0.009 & [0.544, 0.571] & 0.068 & 0.214 \\
      \bottomrule
    \end{tabular}
  }
  \caption{Alpha stability across random admissible-grid menu samples, seeds 42--51}
  \label{tab:alpha-stability-random-seeds}
\end{table}

Figure~\ref{fig:alpha-bias-heatmaps} provides the corresponding visual diagnostic.  For each model, the plot pools the random admissible-grid runs from seeds 42--51, computes the paired bias signal for each canonical/mirror menu pair, and averages that signal within fixed bins of average return $\bar{\mu}$ and average risk $\bar{\sigma}$.  Red cells indicate positive first-position/A-label bias, blue cells indicate negative bias, and all panels use the same color scale.  The heatmaps make the state dependence visible: Qwen 3 (14B) has a positive bias throughout the grid that attenuates at higher average risk, Qwen 3 (8B) and Gemma 4 are negative over much of the grid, and the Llama 8B surface is close to flat.  The exercise is descriptive, but it clarifies which parts of the feasible grid are driving the design-weighted $\hat{\alpha}$ values in Table~\ref{tab:alpha-stability-random-seeds}.

\begin{figure}[htbp]
  \centering
  \includegraphics[width=\textwidth]{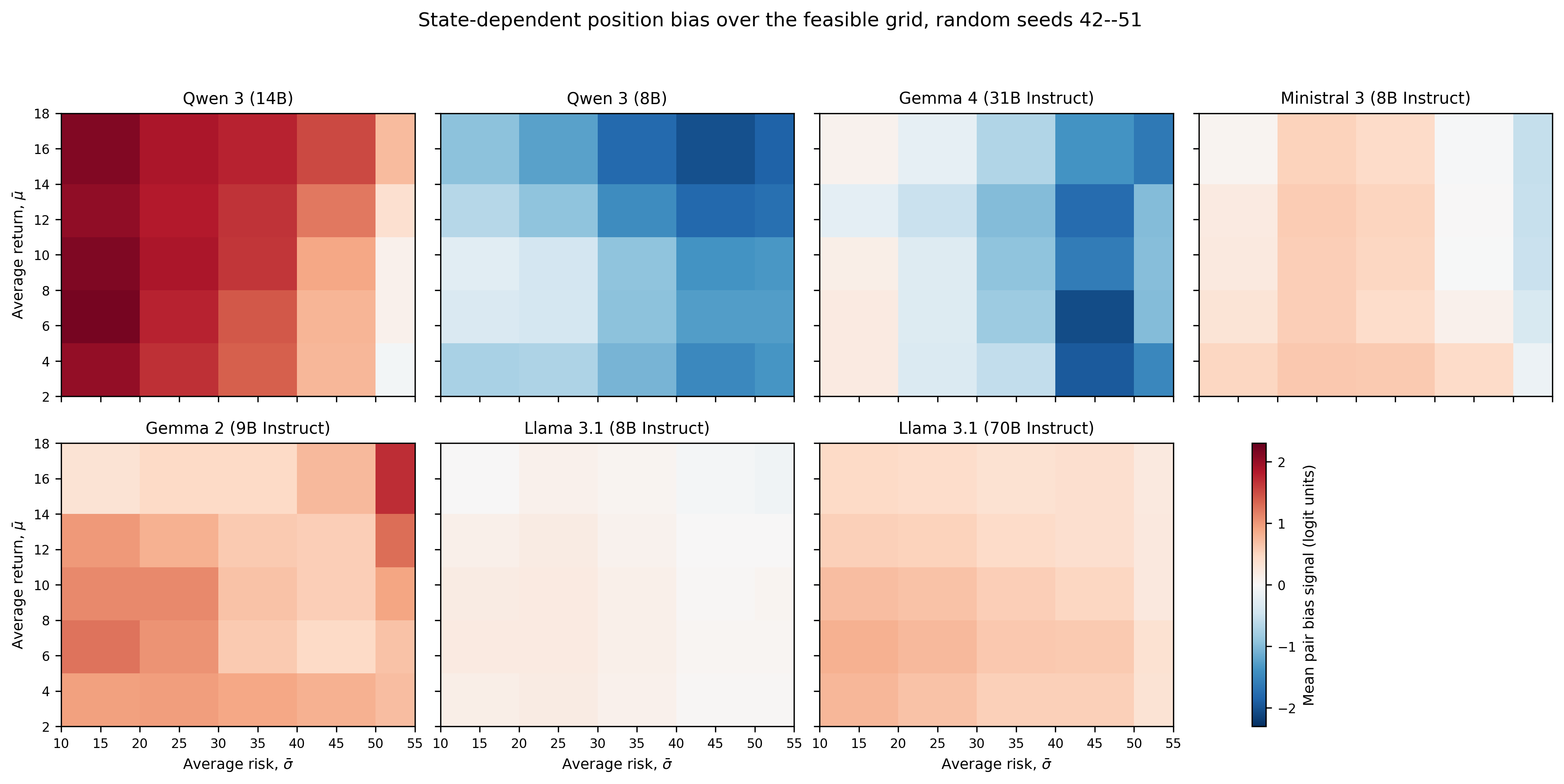}
  \caption{State-dependent position bias over the feasible mean--variance grid. Each cell reports the mean paired bias signal, $0.5(\Delta u_{\text{canonical}}+\Delta u_{\text{mirror}})$, pooled across random admissible-grid menu samples from seeds 42--51.}
  \label{fig:alpha-bias-heatmaps}
\end{figure}

\subsection{Position Bias and Choice Difficulty}

The position-bias term $\hat{\alpha}$ in our structural model is a design-weighted average. This subsection documents that its behavioral footprint is concentrated on menus where the two options are close in utility. We exploit the triad menus from the transitivity design, which are tagged by construction as dominance menus, in which one option Pareto-dominates the other on the mean--variance frontier, as tradeoff menus, in which a higher expected return is purchased with higher risk and the comparison can be resolved only by a risk preference, and as iso-utility menus, in which the two options are equated on the model-implied utility scale. Each menu is presented in a canonical layout and in its position-swapped mirror, so a content-driven chooser must reverse the selected slot across the two layouts. We define the position-determined rate as the share of canonical and mirror pairs in which the model selects the same physical slot in both layouts, that is, the share of comparisons in which the choice is fixed by position rather than by content.

Table~\ref{tab:position-bias-difficulty} reports this rate by model and menu type. On dominance menus every model selects the dominant option with perfect accuracy and the position-determined rate is exactly zero, at zero of $1{,}260$ comparisons pooled across models, with a Wilson 95\% confidence interval of $[0.000, 0.003]$. Position exerts no behavioral influence on these menus because the preference signal is large relative to the position term. The rate rises monotonically as the objective utility gap narrows. It reaches $0.140$ on tradeoff menus and $0.420$ on iso-utility menus, where roughly half of all choices are decided by position alone. A $\chi^2$ test of homogeneity across the three menu types rejects equality decisively, with $\chi^2(2) = 759.2$ and $p < 10^{-160}$. The mean content signal, measured as the absolute position-corrected log-odds $|\delta_z|$, averages $6.1$ on dominance menus, $4.2$ on tradeoff menus, and $1.8$ on iso-utility menus, so the ordering of behavioral position bias mirrors the inverse ordering of preference strength. Position bias is therefore not a uniform labeling artifact. It is latent in the model logits and becomes behaviorally active only on the preference-resolvable tradeoffs that our estimation design is constructed to elicit.

\begin{table}[htbp]
  \centering
  \small
  \begin{tabular}{lccc}
    \toprule
    Model & Dominance & Tradeoff & Iso-utility \\
    \midrule
    Qwen 3 (14B)              & 0.000 & 0.161 & 0.389 \\
    Qwen 3 (8B)               & 0.000 & 0.200 & 0.528 \\
    Gemma 4 (31B Instruct)    & 0.000 & 0.044 & 0.356 \\
    Ministral 3 (8B Instruct) & 0.000 & 0.094 & 0.433 \\
    Gemma 2 (9B Instruct)     & 0.000 & 0.183 & 0.267 \\
    Llama 3.1 (8B Instruct)   & 0.000 & 0.144 & 0.378 \\
    Llama 3.1 (70B Instruct)  & 0.000 & 0.150 & 0.589 \\
    \midrule
    Pooled (all models)       & 0.000 & 0.140 & 0.420 \\
    \bottomrule
  \end{tabular}
  \caption{Position-determined choice rate by menu type. Each cell reports the share of canonical and mirror menu pairs, with $N = 180$ per model and menu type, in which the model selects the same physical slot in both layouts.}
  \label{tab:position-bias-difficulty}
\end{table}

\subsection{Allowing Chain of Thought Reasoning}

To test whether explicit reasoning reduces position bias, we evaluated four models using a Chain of Thought (CoT) prompt \citep{wei2022chain}. The models were instructed to explain their reasoning before outputting a final choice. As shown in Table~\ref{tab:cot-alpha}, CoT reasoning does not systematically eliminate bias; in fact, for most models, the estimated position bias $\hat{\alpha}$ increases significantly or flips direction when CoT is enabled. This suggests that the generated reasoning traces may introduce new forms of path dependence or rationalize the position bias rather than mitigating it.

\begin{table}[htbp]
  \centering
  \small
  \begin{tabular}{lrr}
    \toprule
    Model & D-opt $\hat{\alpha}$ (No CoT) & D-opt $\hat{\alpha}$ (With CoT) \\
    \midrule
    Llama 3.1 (8B Instruct) & 0.172 & 2.628 \\
    Ministral 3 (8B Instruct) & 0.384 & 1.520 \\
    Gemma 2 (9B Instruct) & 0.717 & 2.134 \\
    Qwen 3 (8B) & 0.305 & -1.141 \\
    \bottomrule
  \end{tabular}
  \caption{Position bias $\hat{\alpha}$ with and without Chain of Thought reasoning.}
  \label{tab:cot-alpha}
\end{table}

Table~\ref{tab:cot-params} reports the corresponding structural estimates of the scale parameter $\hat{\kappa}$ and the risk aversion parameter $\hat{\beta}$. We see that explicit reasoning generally increases the scale parameter $\hat{\kappa}$, indicating more deterministic choices, but the effect on the risk aversion parameter $\hat{\beta}$ is mixed.

\begin{table}[htbp]
  \centering
  \small
  \begin{tabular}{lrrrr}
    \toprule
    & \multicolumn{2}{c}{No CoT} & \multicolumn{2}{c}{With CoT} \\
    \cmidrule(lr){2-3} \cmidrule(lr){4-5}
    Model & $\hat{\kappa}$ & $\hat{\beta}$ & $\hat{\kappa}$ & $\hat{\beta}$ \\
    \midrule
    Llama 3.1 (8B Instruct) & 0.038552 & 0.009121 & 0.181967 & 0.005021 \\
    Ministral 3 (8B Instruct) & 0.097941 & 0.011363 & 0.236506 & 0.006451 \\
    Gemma 2 (9B Instruct) & 0.249116 & 0.007871 & 0.259511 & 0.009749 \\
    Qwen 3 (8B) & 0.310677 & 0.003057 & 0.607995 & 0.003874 \\
    \bottomrule
  \end{tabular}
  \caption{Structural estimates of scale $\kappa$ and risk aversion $\beta$ with and without Chain of Thought reasoning.}
  \label{tab:cot-params}
\end{table}

\clearpage
\bibliographystyle{jpe}
\bibliography{REFS}

\end{document}